\documentclass[a4paper,twocolumn,11pt,accepted=2024-02-13]{quantumarticle}
\pdfoutput=1
\usepackage[utf8]{inputenc}
\usepackage[english]{babel}
\usepackage[T1]{fontenc}
\usepackage{amsmath}
\usepackage{hyperref}

\usepackage{tikz}
\usepackage{lipsum}

\usepackage{graphicx}
\usepackage{dcolumn}
\usepackage{amssymb,amsfonts,amsthm}
\usepackage{mathtools} 
\usepackage{txfonts}
\usepackage{bbm} 
\usepackage{bm} 
\usepackage{mathrsfs} 
\usepackage{xcolor}
\usepackage[numbers,sort&compress]{natbib}
\usepackage{braket}
\DeclareMathOperator{\Tr}{Tr}
\allowdisplaybreaks
\usepackage{multicol,blindtext}

\newtheorem{theorem}{Theorem}

\begin{document}

\title{Entanglement catalysis for quantum states and noisy channels}

\author{Chandan Datta}
\orcid{0000-0002-0069-4597}
\email{dattachandan10@gmail.com}
\address{Centre for Quantum Optical Technologies, Centre of New Technologies,
University of Warsaw, Banacha 2c, 02-097 Warsaw, Poland}
\address{Institute for Theoretical Physics III, Heinrich Heine University D\"{u}sseldorf, Universit\"{a}tsstra{\ss}e 1, D-40225 D\"{u}sseldorf, Germany}
\address{Department of Physics, Indian Institute of Technology Jodhpur, Jodhpur 342030, India}

\author{Tulja Varun Kondra}
\address{Centre for Quantum Optical Technologies, Centre of New Technologies,
University of Warsaw, Banacha 2c, 02-097 Warsaw, Poland}

\author{Marek Miller}
\address{Centre for Quantum Optical Technologies, Centre of New Technologies,
University of Warsaw, Banacha 2c, 02-097 Warsaw, Poland}

\author{Alexander Streltsov}
\orcid{0000-0002-7742-5731}
\address{Centre for Quantum Optical Technologies, Centre of New Technologies,
University of Warsaw, Banacha 2c, 02-097 Warsaw, Poland}

\begin{abstract}
    Many applications of the emerging quantum technologies, such as  quantum teleportation and quantum key distribution, require singlets, maximally entangled states of two quantum bits. It is thus of utmost importance to develop optimal procedures for establishing singlets between remote parties. As has been shown very recently, singlets can be obtained from other quantum states by using a quantum catalyst, an entangled quantum system which is not changed in the procedure. In this work we take this idea further, investigating properties of entanglement catalysis and its role for quantum communication. For transformations between bipartite pure states, we prove the existence of a universal catalyst, which can enable all possible transformations in this setup. We demonstrate the advantage of catalysis in asymptotic settings, going beyond the typical assumption of independent and identically distributed systems. We further develop methods to estimate the number of singlets which can be established via a noisy quantum channel when assisted by entangled catalysts. For various types of quantum channels our results lead to optimal protocols, allowing to establish the maximal number of singlets with a single use of the channel. 
\end{abstract}

\maketitle

\section{Introduction}\label{intro}

Quantum catalysis enhances the abilities of remote parties to manipulate entangled systems via local operations and classical communication (LOCC)~\cite{JonathanPhysRevLett.83.3566,EisertPhysRevLett.85.437,Kondra2102.11136,Lipka-Bartosik2102.11846}. Two remote parties, Alice and Bob, can convert a shared quantum state $\ket{\psi}^{AB}$ into another state $\ket{\phi}^{AB}$ whenever the reduced states $\psi^A$ and $\phi^A$ fulfill the majorization relation $\psi^{A}\prec\phi^{A}$~\cite{NielsenPhysRevLett.83.436}. 
If this condition is violated, Alice and Bob cannot convert $\ket{\psi}^{AB}$ into $\ket{\phi}^{AB}$ via LOCC. However, in some cases Alice and Bob can still convert $\ket{\psi}^{AB}$ into $\ket{\phi}^{AB}$ by using catalysis. A quantum catalyst is an additional quantum system in an entangled state $\ket{\tau}^{A'B'}$, enabling the LOCC conversion 
\begin{equation}
    \ket{\psi}^{AB}\otimes \ket{\tau}^{A'B'}\rightarrow \ket{\phi}^{AB}\otimes \ket{\tau}^{A'B'}. \label{eq:Catalysis}
\end{equation}
After the first example of entanglement catalysis was presented~\cite{JonathanPhysRevLett.83.3566}, this topic has been explored in more detail over the last decades~\cite{VidalPhysRevA.62.012304, PhysRevA.64.042314, DuanPhysRevA.71.042319, Turgut_2007, Klimesh0709.3680, Aubrun2008, SandersPhysRevA.79.054302, GraboweckyPhysRevA.99.052348,Kondra2102.11136,Lipka-Bartosik2102.11846,Gupta_PRA2022}. We refer to \cite{catalysis_review} for a general overview of catalysis. 

Despite significant efforts~\cite{VidalPhysRevA.62.012304, PhysRevA.64.042314, DuanPhysRevA.71.042319, Turgut_2007, Klimesh0709.3680, Aubrun2008, SandersPhysRevA.79.054302, GraboweckyPhysRevA.99.052348}, no simple criteria are known for exact catalytic transformations between two given pure bipartite entangled states, while a significant progress has been achieved for approximate weakly correlated entanglement catalysis~\cite{Kondra2102.11136,Lipka-Bartosik2102.11846}. An approximate weakly correlated entanglement catalysis from $\ket{\psi}^{AB}$ to $\ket{\phi}^{AB}$ allows for an error in the final state, assuming that the error can be made arbitrarily small by choosing an appropriate catalyst state. As was shown in~\cite{Kondra2102.11136}, approximate weakly correlated entanglement catalysis between bipartite pure states is fully characterized by entanglement entropy of the initial and the final states. In particular, Alice and Bob can transform $\ket{\psi}^{AB}$ into $\ket{\phi}^{AB}$ iff~\cite{Kondra2102.11136}
\begin{equation}
    S(\psi^A) \geq S(\phi^A), \label{eq:CatalysisPure}
\end{equation}
with the von Neumann entropy $S(\rho) = -\Tr[\rho \log_2 \rho]$.

For any two states fulfilling Eq.~(\ref{eq:CatalysisPure}), it is possible to find a catalyst by using the methods presented in~\cite{Kondra2102.11136,Lipka-Bartosik2102.11846}. The  constructed catalyst depends on the particular states $\ket{\psi}^{AB}$ and $\ket{\phi}^{AB}$, which makes it useful only for this particular transition. It is an open problem whether \emph{universal catalysis} exists, i.e., whether there are quantum states that can catalyze transitions for all quantum states fulfilling Eq.~(\ref{eq:CatalysisPure}). Here, we give a positive answer to this question. Furthermore, we explore the role of catalysis in asymptotic conversion. The advantage of catalysis for setups beyond the usual independent and identically distributed (iid) scenario is also discussed. 

Quantum information science strives to understand the limitations of sending quantum systems over long distances. As any practical quantum channel is necessarily subject to noise, knowing how many qubits can be reliably transmitted via a noisy channel is of crucial importance for the development of quantum technologies. The standard approach to this problem assumes that the communicating parties have unrestricted access to the channel. Then, the problem reduces to the evaluation of quantum capacity, a quantity which captures the number of reliably communicated qubits per channel use~\cite{LloydPhysRevA.55.1613,Divincenzo_1998,Barnum_1998,Schumacher_1998,Devetak1377491}.  
In this work, we introduce \emph{catalytic communication} where the communicating parties make use of a noisy quantum channel only once, and have access to LOCC and entangled catalysts. We develop methods of estimating the \emph{catalytic capacity} of the channel, corresponding to the number of qubits which can be sent reliably in the presence of a catalyst. 
For general quantum channels we provide upper and lower bounds on the catalytic capacity of the channel.

\section{Catalytic transformations of entangled states}

The original definition of entanglement catalysis in Eq.~(\ref{eq:Catalysis}) assumes that the transformation is exact, i.e., the final state $\ket{\phi}^{AB}$ is obtained perfectly without any error. Recently, this definition has been extended to approximate transformations~\cite{Kondra2102.11136,Lipka-Bartosik2102.11846}. In this more general case, an error is allowed in the final state, provided that the error can be made arbitrarily small. We say that an \emph{approximate weakly correlated catalytic transformation} $\ket{\psi}^{AB}\rightarrow\ket{\phi}^{AB}$ is possible if and only if for any $\varepsilon > 0$ there exists a (not necessarily pure) catalyst state $\tau^{A'B'}$ and an LOCC protocol $\Lambda$ such that~\cite{Kondra2102.11136}
\begin{align}
\left\Vert \Lambda\left(\psi^{AB}\otimes\tau^{A'B'}\right)-\phi^{AB}\otimes\tau^{A'B'}\right\Vert _{1}  \leq\varepsilon,\label{eq:CatalyticLOCC2}\\
\mbox{Tr}_{AB}\left[\Lambda\left(\psi^{AB}\otimes\tau^{A'B'}\right)\right]  =\tau^{A'B'}.\label{eq:CatalyticLOCC1}
\end{align}
Here, we use the notation $\psi^{AB} = \ket{\psi}\!\bra{\psi}^{AB}$ and similar for $\phi^{AB}$. Note that the final state $\sigma=\Lambda (\psi^{AB}\otimes\tau^{A'B'})$ can, in principle, exhibit correlations between the system ($AB$) and the catalyst ($A'B'$). However, Eq.~(\ref{eq:CatalyticLOCC2}) ensures that the amount of these correlations can be made arbitrarily small, i.e., the system effectively decouples from the catalyst. The decoupling condition in Eq.~(\ref{eq:CatalyticLOCC2}) also implies a bound on the mutual information between the primary system $AB$ and the catalyst $A'B'$~\cite{Rubboli2111.13356}. Eq.~(\ref{eq:CatalyticLOCC1}) means that the catalyst is unchanged in the process, avoiding any undesired effects due to ``embezzling'' of entanglement~\cite{vanDamPhysRevA.67.060302}.

Alternatively, we can say that there is an approximate catalytic transformation from $\ket{\psi}^{AB}$ to $\ket{\phi}^{AB}$ if there exists a sequence of catalyst states $\{\tau_n^{A'B'}\}$ and a sequence of LOCC protocols $\{\Lambda_n\}$ such that~\cite{Kondra2102.11136}
\begin{align}
\lim_{n\rightarrow\infty}\left\Vert \Lambda_{n}\left(\psi^{AB}\otimes\tau_{n}^{A'B'}\right)-\phi^{AB}\otimes\tau_{n}^{A'B'}\right\Vert _{1}& =0,\label{eq:CatalyticLOCC4}\\
\quad\quad\mbox{Tr}_{AB}\left[\Lambda_{n}\left(\psi^{AB}\otimes\tau_{n}^{A'B'}\right)\right]  =\tau_{n}^{A'B'}.\label{eq:CatalyticLOCC3}
\end{align}
It is immediately clear that these conditions are equivalent to Eqs.~(\ref{eq:CatalyticLOCC2}) and (\ref{eq:CatalyticLOCC1}). As has been shown very recently in~\cite{Kondra2102.11136}, an approximate catalytic transformation from $\ket{\psi}^{AB}$ into $\ket{\phi}^{AB}$ is possible if and only if Eq.~(\ref{eq:CatalysisPure}) is fulfilled. In the rest of this article, whenever we refer to ``catalytic transformations'' we mean approximate weakly correlated catalytic transformations as defined in Eqs.~(\ref{eq:CatalyticLOCC2}) and (\ref{eq:CatalyticLOCC1}).

\subsection{Dimension of catalyst system}
Whenever we talk about state transformations using a catalyst, the question of its dimension comes into the picture. In the original definition of entanglement catalysis~\cite{JonathanPhysRevLett.83.3566} a catalyst of finite dimension has been used. In contrast, we will now see that for arbitrary precision, approximate weakly correlated catalysis requires a catalyst of unbounded dimension in general. To show this, we first introduce logarithmic negativity~\cite{ZyczkowskiPhysRevA.58.883,Vidal_2002_PRA}
\begin{equation}
    E_N(\rho)=\log_2\Vert \rho^{T_A}\Vert_1,
\end{equation}
where $\rho = \rho^{AB}$ is a bipartite state, and $T_A$ represents partial transpose with respect to the subsystem $A$. For pure states, we will also consider the entanglement entropy defined as $E(\ket{\psi}) = S(\psi^A)$~\cite{BennettPhysRevA.53.2046,VedralPhysRevLett.78.2275,HorodeckiRevModPhys.81.865}. Recall that logarithmic negativity is additive on tensor products~\cite{Vidal_2002_PRA}
\begin{equation}
    E_N(\rho\otimes\sigma)=E_N(\rho)+E_N(\sigma).
\end{equation}
Consider now two bipartite pure states $\ket{\psi}$ and $\ket{\phi}$ such that 
\begin{subequations} \label{eq:UnboundedDimension3}
\begin{align}
E(\ket{\psi}) & \geq E(\ket{\phi}), \label{eq:UnboundedDimension1}\\
E_{N}(\ket{\psi}) & <E_{N}(\ket{\phi}). \label{eq:UnboundedDimension2}
\end{align}
\end{subequations}
Examples for such states are provided in Appendix \ref{sec:different_ordering}. Eq.~(\ref{eq:UnboundedDimension1}) implies that $\ket{\psi}$ can be converted into $\ket{\phi}$ by catalytic LOCC~\cite{Kondra2102.11136}. However, as we will see in the following, Eq.~(\ref{eq:UnboundedDimension2}) means that the catalyst must have an unbounded dimension, in order for the conversion to work with arbitrary precision. To see this, consider the transformation  $\psi\otimes\tau_n\rightarrow\phi\otimes\tau_n$, with the properties given in Eqs.~(\ref{eq:UnboundedDimension3}). Remembering that logarithmic negativity is monotonic under LOCC, we have 
\begin{equation}\label{monotonicity_negativity}
E_N(\psi\otimes\tau_n)\geq E_N(\Lambda[\psi\otimes\tau_n])
\end{equation}
for any LOCC protocol $\Lambda$. Furthermore, $\Vert \mu_n -\phi\otimes\tau_n\Vert_1$ is arbitrarily close to zero for some large $n$, where $\mu_n=\Lambda(\psi\otimes\tau_n)$. Now, if the dimension of the catalyst is finite then the logarithmic negativity of $\mu_n$ is arbitrarily close to $\phi\otimes\tau_n$, see Appendix~\ref{sec:ContinuityNegativity} for the proof. Therefore, we have
\begin{equation}
E_N(\psi\otimes\tau_n)\geq E_N(\phi\otimes\tau_n) \Rightarrow E_N(\ket{\psi}) \geq E_N(\ket{\phi}), 
\end{equation}
where in the second step we use the additivity. Hence, we arrive at a contradiction, and as a result, we need an unbounded catalyst to achieve the above transformation. Note that here we consider that the state transformation occurs with arbitrary precision. Otherwise, we can utilise a finite-dimensional catalyst, the dimension of which is determined by the target precision. This kind of phenomenon has also been studied in \cite{Rubboli2111.13356} for general resource theories.

We will now go one step further and prove that for a catalytic transformation from $\ket{\psi}$ to $\ket{\phi}$ correlations between the primary system and catalysis are required if the states fulfill Eqs.~(\ref{eq:UnboundedDimension3}). For this, consider Eq. (\ref{monotonicity_negativity}) and suppose that for all $n$ greater than some $m$, there is no correlation between the system and the catalyst, i.e., $\Lambda(\psi\otimes\tau_n)=\phi'\otimes\tau_n$, where $\phi'$ is arbitrarily close to the desired state $\phi$ in trace distance. Using additivity and continuity of logarithmic negativity, we reach the following contradiction: $E_N(\ket{\psi}) \geq E_N(\ket{\phi})$. Therefore, a catalytic transformation for states fulfilling Eqs.~(\ref{eq:UnboundedDimension3}) requires correlations between the system and the catalyst. As discussed above, these correlations can be made arbitrarily small. 

\subsection{Correlations established by reusing the catalyst}
The correlations between the system and the catalyst also establish correlations across different systems, if the same catalyst is used repeatedly. We will investigate these correlations in the following. To simplify the notation, we will denote the primary system by $S$, and the catalyst will be denoted by $C$. Consider a system in a state $\rho^{S_1}$ is transformed arbitrarily close to $\sigma^{S_1}$ using a catalyst $\tau^C$ and an LOCC operation $\Lambda$ on the system and the catalyst, such that
\begin{eqnarray}
||\mu^{S_1C}-\sigma^{S_1}\otimes\tau^C||_1<\varepsilon \label{eq:CatalysisCorrelations}
\end{eqnarray}
with $\mu^{S_1C} = \Lambda[\rho^{S_1} \otimes \tau^C]$. Now, in the next step we use the same catalyst to convert the same state $\rho^{S_2}$ of another system $S_2$ into $\sigma^{S_2}$. The final state of the systems and the catalyst can be expressed as
\begin{equation}
    \mu^{S_1CS_2}=(\openone^{S_1}\otimes\Lambda^{CS_2})\left(\mu^{S_1C}\otimes\rho^{S_2}\right),
\end{equation}
where we demand that $\mbox{Tr}_{S_1S_2}[\mu^{S_1CS_2}]=\tau^C$ and 
\begin{equation}
\label{eq:use_catalysis}
\left\Vert \mbox{Tr}_{S_1}\left[\mu^{S_1CS_2}\right]-\sigma^{S_2}\otimes\tau^{C}\right\Vert _{1} < \varepsilon.
\end{equation}
Since, the system $S_1$ is correlated with the catalyst, the systems $S_1$ and $S_2$ will also be correlated. We are interested in obtaining an upper bound on this correlation, to be precise, we want an upper bound on $||\mbox{Tr}_C[\mu^{S_1CS_2}]-\sigma^{S_1}\otimes\sigma^{S_2}||_1$. Using the properties of the trace norm, Eq.~(\ref{eq:CatalysisCorrelations}) directly implies that 
\begin{equation}
\left\Vert \mu^{S_1C}\otimes\rho^{S_2}-\sigma^{S_1}\otimes\tau^{C}\otimes\rho^{S_2}\right\Vert _{1}<\varepsilon.
\end{equation}
Applying the LOCC procedure $\Lambda^{CS_2}$ onto the catalyst $C$ and the system $S_2$ and using the fact that the trace norm does not increase under quantum operations we further obtain
\begin{align}
& \left\Vert \Lambda^{CS_2}\left[\mu^{S_1C}\otimes\rho^{S_2}\right]-\Lambda^{CS_2}\left[\sigma^{S_1}\otimes\tau^{C}\otimes\rho^{S_2}\right]\right\Vert _{1}\nonumber \\ &  < \varepsilon,
\end{align}
which is equivalent to 
\begin{equation}
\left\Vert \mu^{S_1S_2C}-\sigma^{S_1}\otimes\mu^{S_2C}\right\Vert _{1}<\varepsilon.
\end{equation}
Now using triangle inequality, we find
\begin{align}
\label{eq:catalyst_two_before_partial}
&\left\Vert \mu^{S_1S_2C}-\sigma^{S_1}\otimes\sigma^{S_2}\otimes\tau^{C}\right\Vert _{1}  \nonumber\\
&\leq\left\Vert \mu^{S_1S_2C}-\sigma^{S_1}\otimes\mu^{S_2C}\right\Vert _{1}+\nonumber\\
&\quad\,\left\Vert \sigma^{S_1}\otimes\mu^{S_2C}-\sigma^{S_1}\otimes\sigma^{S_2}\otimes\tau^{C}\right\Vert _{1}\nonumber\\ 
  &< 2\varepsilon.
\end{align}
As trace norm does not increase under partial trace, we note that
\begin{equation}\label{eq:catalyst_two}
\left\Vert \mbox{Tr}_{C}\left[\mu^{S_1S_2C}\right]-\sigma^{S_1}\otimes\sigma^{S_2}\right\Vert _{1}<2\varepsilon.
\end{equation}

By repeating the same argument, this time taking Eq.~\eqref{eq:catalyst_two_before_partial} instead of Eq.~\eqref{eq:CatalysisCorrelations} as the starting point,
and utilising the same catalyst state $\tau^{C}$ as in Eq.~\eqref{eq:use_catalysis},
we can see that
\begin{equation}\label{three_catalyst}
   \left\Vert \mu^{S_1 S_2 S_3 C}-\sigma^{S_1}\otimes\sigma^{S_2}\otimes\sigma^{S_3}\otimes\tau^{C}\right\Vert _{1} < 3 \varepsilon.
\end{equation}
We can extend this reasoning to $n$ successive applications of the catalyst $\tau^{C}$ to obtain

\begin{equation}\label{n_catalyst}
   \left\Vert \mu^{S_1\cdots S_n C}-\sigma^{S_1}\otimes\cdots\otimes\sigma^{S_n}\otimes\tau^{C}\right\Vert _{1} < n \varepsilon,
\end{equation}
which implies that
\begin{equation}
    \left\Vert \mbox{Tr}_{C}\left[\mu^{S_1\cdots S_n C}\right]-\sigma^{S_1}\otimes\cdots\otimes\sigma^{S_n}\right\Vert _{1}<n\varepsilon.
\end{equation}

As we can see, the amount of correlation between systems $S_1, \ldots, S_n$ 
remains small after utilising the catalyst $\tau^{C}$ a finite number of times.

\section{Universal catalysis for bipartite pure states}
While the results in~\cite{Kondra2102.11136} prove that a catalytic conversion between bipartite states $\ket{\psi}^{AB}$ and $\ket{\phi}^{AB}$ is possible in principle if $S(\psi^A) \geq S(\phi^A)$, the state of the catalyst presented in~\cite{Kondra2102.11136} is not universal, as it strongly depends on the initial and the final state. It has remained an open problem in~\cite{Kondra2102.11136} whether a universal catalyst state exists, which can enable all catalytic transformations at once. We will close this gap with the following theorem, showing that universal catalysis is indeed possible for transitions between bipartite pure states.

\begin{theorem} \label{thm:UniversalCatalysis}
Consider a bipartite Hilbert space of arbitrary but finite dimension. For every $\varepsilon>0$
there exists a universal catalyst state $\tau_{\varepsilon}$ such that for every pair of pure states $\ket{\psi}^{AB}$ and $\ket{\phi}^{AB}$ with $S(\psi^A) \geq S(\phi^A)$ there is an LOCC protocol $\Lambda$ for which 
\begin{subequations} \label{eq:UniversalCatalysis}
\begin{align}
\left\Vert \Lambda\left(\psi^{AB}\otimes\tau_{\varepsilon}\right)-\phi^{AB}\otimes\tau_{\varepsilon}\right\Vert _{1}  <\varepsilon, \\
\mathrm{Tr}_{AB}\left[\Lambda\left(\psi^{AB}\otimes\tau_{\varepsilon}\right)\right]  =\tau_{\varepsilon}.
\end{align}
\end{subequations}
\end{theorem}
\begin{proof}
It is enough to prove the theorem for the case that $\ket{\psi}$ and $\ket{\phi}$ have the same Schmidt basis. To see this, note that Alice and Bob can always apply local unitaries to the initial state $\ket{\psi}$, ensuring that the Schmidt basis of $\ket{\psi}$ coincides with the Schmidt basis of $\ket{\phi}$.

First, we will show that there is a \textit{finite} collection of states $\{ \ket{\phi_{s}} \}$ such that  for every state  $\ket{\phi}$ in the considered Hilbert space $\mathcal {H}_{AB}$ there exists $\ket{\phi_{i}}$ for which 
\begin{subequations} \label{eq:mm_UniversalGridAll}
\begin{align}
\left\Vert \, \ket{\phi_{i}}\!\bra{\phi_{i}}-\ket{\phi}\!\bra{\phi} \,\right\Vert _{1} & < \frac{\varepsilon}{10}, \label{eq:mm_UniversalGrid}\\
\phi^A & \prec \phi_{i}^A, \label{eq:mm_UniversalMajorization}
\end{align}
\end{subequations}
where we consider strict majorization in Eq.~\eqref{eq:mm_UniversalMajorization}. To show this, let $U_{\phi}$ be a collection of open subsets of the set of pure states, defined in the following way.
If $\ket{\phi} = \sum_i \sqrt{c_i} \ket{ii}$, by $V_{\phi}$ we denote the set of all states that $\phi^A$ majorizes \textit{strictly}, i.e.
$\sum_i \sqrt{d_i} \ket{i}\ket{i} \in V_{\phi}$ iff
$\sum_{i=1}^{k} c_i > \sum_{i=1}^{k} d_i$,
for every $k=1,2,\ldots, d-1$, where $d = \min\{\dim \mathcal{H}_A, \dim \mathcal{H}_B\}$. Note that the coefficients $\{c_i\}$ and $\{d_i\}$ are sorted in non-increasing order.
Let also $B_{\phi} = \left\{ \ket{\psi}: || \ket{\psi}\! \bra{\psi} - \ket{\phi}\!\bra{\phi} ||_1 < \frac{\varepsilon}{10} \right\}$.
If $\ket{\phi}$ is a product state, let $U_{\phi} = B_{\phi}$;
otherwise $U_{\phi} = B_{\phi} \cap V_{\phi}$.
Clearly, each $U_{\phi}$ is open, and moreover the whole state space is covered by the collection (Note that for the collection to cover the whole set, it is necessary to single out the special case of $\ket{\phi}$ being a product state). Since the set of states is compact, there is a finite number of states $\{\ket{\phi_s}\}$ such that the sets $U_{\phi_s}$ cover the entire set of states.

Notice that by increasing slightly the largest Schmidt coefficient of a state $\ket{\phi_s}$, we preserve the majorization condition in Eq.~\eqref{eq:mm_UniversalMajorization}.
This way,
by imposing an even smaller threshold in Eq. \eqref{eq:mm_UniversalGrid},
and because the number of states $\ket{\phi_s}$ is finite,
we can always choose them to be such that 
\begin{align}
    &L = \nonumber\\ &\min \Big\{ |S(\phi_{i}^{A}) - S(\phi_{j}^{A})|,
    \, i,j: S(\phi_{i}^{A}), S(\phi_{j}^{A})> 0\Big\} \nonumber\\
    &> 0,
\end{align}
i.e., the entropies of the reduced states $\phi_{s}^A$ are all different as long as they are strictly positive.

Knowing the number $L$, we can repeat the above argument and show that there is also a finite collection $\ket{\psi_r}$ of states such that for every state  $\ket{\psi} \in \mathcal {H}_{AB}$ there exists $\ket{\psi_{j}}$ for which 
\begin{subequations} \label{eq:mm_UniversalGridAllInitial}
\begin{align}
\left\Vert \, \ket{\psi_{j}}\!\bra{\psi_{j}}-\ket{\psi}\!\bra{\psi} \,\right\Vert _{1} & < \delta, \label{eq:mm_UniversalGridInitial}\\
\psi^A & \prec \psi_{j}^A, \label{eq:mm_UniversalMajorizationInitial}
\end{align}
\end{subequations}
where $\delta>0$ is a number such that $\delta < \frac{\varepsilon}{4}$ and
\begin{equation}
\frac{\delta}{2} \log(d-1)+h\left(\frac{\delta}{2}\right) < L.
\end{equation}
The last inequality, together with the Fannes-Audenaert inequality, ensures that
$S(\psi^{A}) - L < S(\psi_{j}^{A})$. 

We will now construct the universal catalyst state. 
Recall that a catalytic transformation $\ket{\psi}\rightarrow\ket{\phi}$ is possible if and only if
$S(\psi^A) \geq S(\phi^A) \label{eq:mm_CatalyticGrid}$  ~\cite{Kondra2102.11136}.
Let $F$ be the finite set of pairs of indices $(r,s)$ such that
$S(\psi_{r}^A) \geq S(\phi_{s}^A)$.
For $(r,s) \in F$, let
$\tau_{ r, s}$ be a catalyst that enables the transition $\ket{\psi_{r}}\rightarrow\ket{\phi_{s}}$ with an error at most $\frac{\varepsilon}{2}$.
We define:
\begin{equation}
    \tau_{\varepsilon} = \bigotimes_{(r, s) \in F} \tau_{r, s}. \label{eq:mm_UniversalCatalystState}
\end{equation}

As we will now see, the state $\tau_{\varepsilon}$ is the desired universal catalyst state.
Let $\ket{\psi}$, $\ket{\phi}$ be states such that
$S(\psi^A) \geq S(\phi^A)$.
According to the above discussion, we can find indices $r,s$ for which
\begin{subequations} 
\label{eq:mm_UniversalGridSpecific}
\begin{align}
\left\Vert \, \ket{\phi_{s}}\!\bra{\phi_{s}}-\ket{\phi}\!\bra{\phi} \,\right\Vert _{1}   < \frac{\varepsilon}{10},\\
\left\Vert \,\ket{\psi_{r}}\!\bra{\psi_{r}}-\ket{\psi}\!\bra{\psi} \,\right\Vert _{1}   < \frac{\varepsilon}{4},\\
S(\psi^{A}) - L < S(\psi_{r}^{A}),
\end{align}
\end{subequations}
and the conversion $\ket{\psi} \rightarrow \ket{\psi_{r}}$ is possible via LOCC.

If $S(\phi_{s}^{A}) = 0$, then $\ket{\phi_s}$ is a product state, and there is an LOCC transformation $\ket{\psi_{r}}\rightarrow\ket{\phi_{s}}$.
Suppose $S(\phi_{s}^{A}) > 0$.
There exists a state $\ket{\theta}$ such that
$\left\Vert \, \ket{\phi_{s}}\!\bra{\phi_{s}}-\ket{\theta}\!\bra{\theta} \,\right\Vert _{1}  < \frac{\varepsilon}{20}$,
$S(\theta^A)<S(\phi_{s}^A)$.
Hence, there exists another state $\ket{\phi_t}$ from the finite collection defined above, such that
$\left\Vert \, \ket{\phi_{t}}\!\bra{\phi_{t}}-\ket{\theta}\!\bra{\theta} \, \right\Vert _{1}  < \frac{\varepsilon}{10}$,
and $S(\phi_{t}^A)\leq S(\theta^A)$.
Clearly 
$\left\Vert \, \ket{\phi_{t}}\!\bra{\phi_{t}}-\ket{\phi}\!\bra{\phi} \, \right\Vert _{1}  < \frac{\varepsilon}{10} + \frac{\varepsilon}{20} + \frac{\varepsilon}{10} = \frac{\varepsilon}{4}$.
If $S(\phi_{t}^{A}) = 0$, then
$\ket{\phi_t}$ is a product state which can be prepared via LOCC,
just as before.  If $S(\phi_{t}^{A})>0$,
then because $\ket{\phi_s} \neq \ket{\phi_t}$ 
we have that 
\begin{multline}
    S(\phi_{t}^{A}) \leq S(\phi_{s}^{A}) - L \leq  
    S(\phi^{A}) - L \leq S(\psi^{A}) - L \\
    \leq  S(\psi_{r}^{A}),
\end{multline}
where the second inequality follows from the majorization between $\phi^{A}$ and $\phi^{A}_s$.
In any case,
there exists an LOCC protocol that converts 
$\ket{\psi_r}$ into $\ket{\phi_t}$ (or into $\ket{\phi_s}$), utilising the universal catalyst state $\tau_{\varepsilon}$ with an error at most $\frac{\varepsilon}{2}$.  
Since neither $\ket{\psi_r}$ nor $\ket{\phi_{t}}$ (as well as $\ket{\phi_s}$) are
further than $\frac{\varepsilon}{4}$ from respectively $\ket{\psi}$ or $\ket{\phi}$, 
an additional application of the triangle inequality 
concludes the proof.
\end{proof}

The theorem just presented shows that universal catalysis is possible for transformations between bipartite pure states, solving a problem pointed out in~\cite{Kondra2102.11136}. It remains an open question whether universal catalysis is possible also for transformations between mixed states, i.e., whether Eqs. (\ref{eq:UniversalCatalysis}) can be extended to mixed states $\rho^{AB}$ and $\sigma^{AB}$ whenever the conversion $\rho^{AB} \rightarrow \sigma^{AB}$ is possible via catalytic LOCC with a catalyst which depends on the initial and the final state. A similar open question concerns catalytic transformations of entangled states in multipartite settings. We also note that universal catalysis has been previously investigated in quantum thermodynamics, where it was shown that almost all states can be used as universal catalysts~\cite{Lipka-Bartosik2006.16290}.

\section{Catalysis in asymptotic setups}

\subsection{Catalysis for iid systems}

As has been discussed in~\cite{Kondra2102.11136,Lipka-Bartosik2102.11846}, catalytic conversion of entangled states is closely related to asymptotic conversion. For the latter, the goal is to convert $n$ copies of an initial state $\rho$ into $m$ copies of a final state $\sigma$ by using LOCC, allowing for an error which vanishes in the limit of large $n$. The figure of merit in this setup is the optimal conversion rate, i.e., the maximal possible value of $m/n$. Whenever there exists an asymptotic conversion $\rho \rightarrow \sigma$ with unit rate, there also exists a catalytic LOCC protocol converting from $\rho$ into $\sigma$. We refer to the Refs.~\cite{Kondra2102.11136,Lipka-Bartosik2102.11846} for more details.

We will now go one step further, investigating asymptotic conversion of entangled states in the presence of catalysts. To keep the notation simple, we denote the system of Alice and Bob by $S$, and the catalyst will be denoted by $C$. We say that a state $\rho$ can be converted into another state $\sigma$ via \emph{asymptotic catalysis} with rate $R$, if there exists a sequence of catalyst states $\{\tau^{C}_n\}$ and a sequence of LOCC protocols $\{\Lambda_n\}$, such that
\begin{align}
\lim \limits_{n\rightarrow \infty} \left\Vert \Lambda_n \left[\rho^{\otimes n}\otimes\tau^{C}_n \right]-\sigma^{\otimes \lfloor n R \rfloor}\otimes\tau^{C}_n\right\Vert _{1} & = 0,
\end{align}
and
\begin{align}
\mbox{Tr}_{S^{\otimes \lfloor n R \rfloor}}\left\{\Lambda_n\left[\rho^{\otimes n}\otimes\tau^{C}_n \right]\right\} & =\tau^{C}_n
\end{align}
for every integer $n$, where $\Lambda_{n}[\rho^{\otimes n}\otimes\tau_{n}^{C}]$ is
a state in $S^{\otimes\left\lfloor nR\right\rfloor }\otimes C$.
The supremum taken over all such numbers $R$ is the \emph{optimal rate} of converting $\rho$ into $\sigma$ via asymptotic catalysis. With this in the following theorem we prove that in the presence of entangled catalysts asymptotic transformations and single-copy transformations are equally powerful.
\begin{theorem} \label{thm:AsymptoticCatalysis}
A state $\rho$ can be transformed into $\sigma$ via asymptotic catalysis with unit rate if and only if there exists a catalytic LOCC transformation converting one copy of $\rho$ into $\sigma$.
\end{theorem}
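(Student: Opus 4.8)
The plan is to use a \emph{conveyor-belt} (pipeline) construction that turns the block transformation supplied by asymptotic catalysis into a single-copy catalytic one, while keeping the catalyst \emph{exactly} invariant. First I would fix $\varepsilon>0$ and invoke the hypothesis to obtain a block length $n$, a catalyst $\tau_n^C$ and an LOCC map $\Lambda_n$ with $\|\Lambda_n(\rho^{\otimes n}\otimes\tau_n^C)-\sigma^{\otimes n}\otimes\tau_n^C\|_1\le\varepsilon$ and, crucially, $\mathrm{Tr}_{S^{\otimes n}}[\Lambda_n(\rho^{\otimes n}\otimes\tau_n^C)]=\tau_n^C$ exactly. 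Writing $\mu=\Lambda_n(\rho^{\otimes n}\otimes\tau_n^C)$, I would build a single-copy catalyst as a uniform classical mixture over $n$ \emph{phases} $a=0,\dots,n-1$, where phase $a$ stores $a$ freshly received copies of $\rho$, the not-yet-dispensed output systems of the previous block (a marginal of $\mu$ on $n-1-a$ systems together with the $C$-register), and a classical flag $\ket{a}$ labelling the phase; concretely $\tau_\varepsilon=\frac1n\sum_{a=0}^{n-1}\ket{a}\!\bra{a}\otimes\rho^{\otimes a}\otimes\mathrm{Tr}_{S_1\cdots S_{a+1}}[\mu]$, the partial trace removing the $a+1$ already-released systems.

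Next I would define the single-copy LOCC $\tilde\Lambda$ as a flag-controlled operation on (incoming system) $\otimes$ (catalyst). For phases $a=0,\dots,n-2$ the map is a pure register relabelling: enqueue the incoming copy of $\rho$, release one stored output system as the actual output, and advance the flag to $a+1$. For phase $a=n-1$ the incoming copy completes a block of $n$ copies of $\rho$; here I apply $\Lambda_n$ to these $n$ systems together with the stored $C$-register, obtain $\mu$, release one of its $n$ output systems, store the remaining $n-1$, and reset the flag to $0$. The key consistency point is that phase $n-1$ is reached only after all output systems of the previous block have been released, so the $C$-register marginal is exactly $\mathrm{Tr}_{S^{\otimes n}}[\mu]=\tau_n^C$ again; hence $\Lambda_n$ acts on precisely $\rho^{\otimes n}\otimes\tau_n^C$ and is legitimately reapplied. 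Keeping the flag as a classically correlated register shared by $A'$ and $B'$ makes every step LOCC.

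Two things then remain. For exact invariance I would observe that $\tilde\Lambda$ acts on the flag as the cyclic shift $a\mapsto a+1\ (\mathrm{mod}\ n)$, which is a bijection; since the phases carry equal weight $1/n$, the uniform mixture is an exact fixed point of $\mathrm{Tr}_{S}\circ\tilde\Lambda$, giving $\mathrm{Tr}_S[\tilde\Lambda(\rho\otimes\tau_\varepsilon)]=\tau_\varepsilon$. For the output error I would compare the real run against an \emph{ideal} run in which $\mu$ is replaced by $\sigma^{\otimes n}\otimes\tau_n^C$: in the ideal run every released system is exactly $\sigma$ and decoupled from a catalyst that is reproduced exactly, so the output equals $\sigma\otimes\tau_\varepsilon$ on the nose. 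All states appearing in either run are obtained from $\mu$ (respectively $\sigma^{\otimes n}\otimes\tau_n^C$) by partial traces, register relabellings and tensoring with fixed copies of $\rho$, all of which are trace-norm nonincreasing; averaging over the $n$ phases then yields $\|\tilde\Lambda(\rho\otimes\tau_\varepsilon)-\sigma\otimes\tau_\varepsilon\|_1\le 2\varepsilon$, and rescaling $\varepsilon$ finishes the argument. The main obstacle is exactly the tension between the \emph{exact} catalyst-preservation requirement of Eq.~(\ref{eq:CatalyticLOCC1}) and the merely approximate block map: the construction resolves it by funnelling all of the approximation into the single released system while the catalyst's flag is permuted exactly, so that the only inexactness is in the output, never in the returned catalyst.
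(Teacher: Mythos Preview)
Your proposal is correct and follows essentially the same conveyor-belt construction as the paper's proof: the catalyst is a uniform mixture over $n$ phases, each storing partial input blocks $\rho^{\otimes a}$ together with the appropriate marginal of $\mu=\Lambda_n(\rho^{\otimes n}\otimes\tau_n^C)$, the LOCC protocol applies $\Lambda_n$ only on the full-block phase and otherwise just shifts registers, and the cyclic permutation of the flag yields exact catalyst invariance while a triangle-inequality comparison to the ideal run (with $\mu$ replaced by $\sigma^{\otimes n}\otimes\tau_n^C$) gives the $2\varepsilon$ output bound. The only cosmetic differences are indexing conventions and which marginals of $\mu$ are kept; the mechanism and the error analysis are the same.
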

\noindent We refer to Appendix \ref{appendix:asymptotic_catalysis} for the proof. We point out that Theorem~\ref{thm:AsymptoticCatalysis} generalizes Theorem 1 of~\cite{Kondra2102.11136}, and also holds for multipartite systems and multipartite LOCC protocols. In this general setup, the existence of an asymptotic catalytic transition ensures the existence of a catalytic transformation on the single-copy level. Note that this kind of transformation has been studied before for pure bipartite states in \cite{DuanPhysRevA.71.042319}, where it has been shown that multiple-copy exact catalytic transformation is equivalent to the single-copy catalytic transformation. On the contrary, our result is more general since it takes into account arbitrarily small error in the final state of the system as well as arbitrarily small correlation in the final state of the system and the catalyst.

\subsection{Asymptotic settings beyond iid}
In the usual independent and identically distributed (iid) scenario, one assumes that Alice and Bob have access to many copies of a bipartite quantum state $\ket{\psi}$, i.e., the total state is given by $\ket{\psi}^{\otimes n}$. We will now go one step further, assuming that Alice and Bob have access to states of the form $\otimes_{i=1}^n\ket{\psi_i}$, where the bipartite states $\ket{\psi_i}$ are not necessarily identical. As we will show below, catalysis provides significant advantage for such non-iid settings. 

In the following, we say that the sequence $\{\ket{\psi_i}\}$ allows to extract a singlet with fidelity $f$ and probability $p$ if there exists an integer $n$ and a probabilistic LOCC protocol $\Lambda$ such that 
\begin{align}
\Tr\left(\Lambda\left[\otimes_{i=1}^{n}\psi_{i}\right]\right)  =p,\\
\frac{\braket{\phi_{2}^{+}|\Lambda\left[\otimes_{i=1}^{n}\psi_{i}\right]|\phi_{2}^{+}}}{p}  =f.
\end{align}
Moreover, we say that the sequence $\{\ket{\psi_i}\}$ allows to extract $m$ singlets with catalysis, if for any $\varepsilon > 0$ there exists an integer $n$, a catalyst state $\tau$, and an LOCC protocol $\Lambda$ such that  
\begin{align}
\left\Vert \Lambda\left[\left(\otimes_{i=1}^{n}\psi_{i}\right)\otimes\tau\right]-\ket{\phi_{2}^{+}}\!\bra{\phi_{2}^{+}}^{\otimes m}\otimes\tau\right\Vert _{1}  <\varepsilon,\\
\Tr_{S}\left\{ \Lambda\left[\left(\otimes_{i=1}^{n}\psi_{i}\right)\otimes\tau\right]\right\}   =\tau.
\end{align}
We are now ready to prove the following theorem.

\begin{theorem} \label{thm:CatalysisBeyondIID}
Given fidelity $f>\frac{1}{2}$ and any $\varepsilon>0$, 
there is a sequence of two-qubit states $\{\ket{\psi_i}\}$ such that for any $n \geq 1$
the probability of converting the state $\otimes_{i=1}^n\ket{\psi_i}$
into a singlet, via LOCC with fidelity $f$, is smaller than $\varepsilon$.
At the same time, an unbounded number of singlets can be extracted with certainty from $\otimes_{i=1}^n\ket{\psi_i}$ with the help of catalysis, as $n\rightarrow \infty$.
\end{theorem}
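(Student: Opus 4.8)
The plan is to reduce both halves of the statement to conditions on a single sequence of Schmidt parameters and then exhibit a sequence meeting both. I would take $\ket{\psi_i}=\sqrt{p_i}\,\ket{00}+\sqrt{q_i}\,\ket{11}$ with $q_i=1-p_i\le\tfrac12$, so that $E(\ket{\psi_i})=h(p_i)$ and the largest Schmidt coefficient of $\ket{\psi_i}$ equals $p_i$. The catalytic half will be governed by the total entanglement entropy $\sum_i h(p_i)$, whereas the probabilistic half will be governed by the largest Schmidt coefficient $\pi_n:=\prod_{i=1}^{n}p_i$ of the product state. Everything then comes down to choosing $\{q_i\}$ so that $\sum_i h(p_i)$ diverges while $\prod_i p_i$ stays arbitrarily close to $1$.

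For the catalytic claim I would invoke additivity of the entanglement entropy, $E(\otimes_{i=1}^{n}\psi_i)=\sum_{i=1}^{n}h(p_i)$, together with the characterization of catalytic transformations in Eq.~\eqref{eq:CatalysisPure}. Since the target $\ket{\phi_2^+}^{\otimes m}$ has reduced entropy exactly $m$, the transformation $\otimes_{i=1}^{n}\psi_i\to\ket{\phi_2^+}^{\otimes m}$ is catalytically possible whenever $\sum_{i=1}^{n}h(p_i)\ge m$. Hence one can extract $m=\lfloor\sum_{i=1}^{n}h(p_i)\rfloor$ singlets with catalysis, and if $\sum_i h(p_i)=\infty$ this number grows without bound as $n\to\infty$.

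The hard part is the probabilistic no-go, and the main obstacle is that the obvious entanglement measures are useless here: for the sequences we need, both the entropy and the logarithmic negativity of $\otimes_{i=1}^{n}\psi_i$ diverge, so neither can bound the probability of distilling one fixed singlet. The way out is to route the argument through the entanglement monotone $1-\lambda_{\max}$ (the sum of all but the largest Schmidt coefficient, non-increasing on average under LOCC), which stays \emph{small}, equal to $1-\pi_n$, for our states. Writing a fidelity-$f$ extraction protocol as LOCC branches with pure global outputs $\ket{\chi_k}$ of probability $p_k$ and a declared-success set $\mathcal{S}$ of total weight $p$, I would combine three facts. First, monotonicity on average gives $\sum_k p_k\,\lambda_{\max}(\chi_k)\ge\lambda_{\max}(\otimes_i\psi_i)=\pi_n$, where $\lambda_{\max}(\chi_k)$ is the largest Schmidt coefficient of $\ket{\chi_k}$ across the $A\!:\!B$ cut. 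Second, since any state that is a product across this cut reduces on the two output qubits to a separable state, whose singlet fidelity is at most $\tfrac12$, one obtains a pointwise bound $f_k\le\Phi(\lambda_{\max}(\chi_k))$ with $\Phi(\lambda)=1-\tfrac{\lambda}{2}+\sqrt{2\lambda(1-\lambda)}$ and $\Phi(1)=\tfrac12$. Third, $\Phi$ is concave, so Jensen's inequality upgrades this to $f\le\Phi\!\big(\tfrac1p\sum_{k\in\mathcal{S}}p_k\,\lambda_{\max}(\chi_k)\big)$, forcing the success-averaged $\lambda_{\max}$ below a threshold $\beta(f):=\Phi^{-1}(f)<1$. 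Using $\lambda_{\max}\le1$ on the failure branches then yields $\pi_n\le p\,\beta(f)+(1-p)$, that is,
\[
p\le\frac{1-\pi_n}{1-\beta(f)}\le\frac{1-\prod_i p_i}{1-\beta(f)}.
\]
Handling the \emph{average} (rather than per-branch) fidelity through the concavity of $\Phi$ is the step I expect to need the most care.

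It remains to reconcile the two requirements, which is where the freedom in $\{q_i\}$ is spent. Taking $q_i=a/\big(i(\ln i)^2\big)$ for a small constant $a$, the series $\sum_i q_i$ converges and can be made as small as we like, so $\prod_i p_i\ge 1-\sum_i q_i$ is as close to $1$ as desired; by the displayed bound this makes $p<\varepsilon$ uniformly in $n$ for the fixed $f>\tfrac12$. At the same time $h(p_i)\sim q_i\log_2(1/q_i)\sim (a/\ln 2)\,\tfrac{1}{i\ln i}$, and $\sum_i \tfrac{1}{i\ln i}$ diverges, so the catalytic extraction is unbounded. Choosing $a$ small enough to meet the threshold in the probability bound completes the construction.
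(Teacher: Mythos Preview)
Your proposal is correct and follows the same global strategy as the paper: pick the Schmidt parameters $q_i$ so that $\prod_i(1-q_i)$ stays arbitrarily close to $1$ (forcing the largest Schmidt coefficient of $\otimes_i\psi_i$ to remain near $1$) while $\sum_i h(q_i)$ diverges (so catalytic singlet extraction is unbounded via Eq.~\eqref{eq:CatalysisPure}). Both you and the paper construct such a sequence with $q_i\sim c/(i(\ln i)^{1+u})$.

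The genuine difference is in how the probabilistic no-go is obtained. The paper simply invokes a closed formula from~\cite{kondra2021stochastic} for the maximal probability of converting a pure state with largest Schmidt coefficient $\lambda$ into a fidelity-$f$ singlet, namely $P_f=(1-\lambda)/\sin^2(\pi/4-\cos^{-1}\sqrt{f})$, and then checks that it can be made smaller than $\varepsilon$. You instead derive the bound from first principles, using that Vidal's monotone $1-\lambda_{\max}$ is non-increasing on average under LOCC together with a concave pointwise bound on singlet fidelity and Jensen's inequality. This is a pleasant self-contained route; in fact, if you take the branch outputs to be \emph{pure two-qubit} states (which you may do without loss of generality, since any LOCC protocol on a pure input can be refined into branches that are pure on the two output qubits) and use the tight bound $\Phi(\lambda)=\tfrac12+\sqrt{\lambda(1-\lambda)}$, your inequality $p\le(1-\pi_n)/(1-\beta(f))$ reproduces the cited formula exactly, since $1-\beta(f)=\tfrac12-\sqrt{f(1-f)}=\sin^2(\pi/4-\cos^{-1}\sqrt{f})$.

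One point to tighten: your stated $\Phi(\lambda)=1-\lambda/2+\sqrt{2\lambda(1-\lambda)}$ is not justified in the setup you describe (global pure outputs traced down to two qubits), and you yourself flag this step as needing care. The cleanest fix is the one above: work directly with pure two-qubit branches and use $\Phi(\lambda)=\tfrac12+\sqrt{\lambda(1-\lambda)}$, which is concave on $(0,1)$ with $\Phi(1)=\tfrac12$. Alternatively, a trace-distance perturbation from the top Schmidt vector gives the cruder but valid and concave bound $\Phi(\lambda)=\tfrac12+\sqrt{1-\lambda}$, which already suffices for the conclusion.
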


\begin{proof}
Let $\lambda^{\Psi}_n$ denote the square of the largest Schmidt coefficient of 
the yet to be constructed state $\ket{\Psi_n} = \otimes_{i=1}^n\ket{\psi_i}$.
According to \cite{kondra2021stochastic}, Theorem 3 and below,
the maximal probability of converting the state $\ket{\Psi_n}$
into a singlet $\ket{\phi_2^{+}}$ with fidelity $f$ is
given by
\begin{equation}
    P_{f} = \frac{1 - \lambda^{\Psi}_n}{\sin^{2} \left( \frac{\pi}{4} - \cos^{-1} \sqrt{f} \right)},
\end{equation}
as long as the following quantity is negative:
\begin{equation}
m_n = \sin^{-1}\sqrt{1-\lambda^{\Psi}_n} + \cos^{-1} \sqrt{f} - \frac{\pi}{4} <0.    
\end{equation}
Because $f> \frac{1}{2}$, we have that $\cos^{-1} \sqrt{f} < \frac{\pi}{4}$.
Let us choose $0<\delta<\frac{1}{2}$ such that 
\begin{subequations}
\begin{align}
    \sin^{-1} \sqrt{\delta} + \cos^{-1} \sqrt{f} - \frac{\pi}{4} < 0,\,\,\mbox{and} \\ 
    \frac{\delta}{\sin^{2} \left( \frac{\pi}{4} - \cos^{-1} \sqrt{f} \right)} < \varepsilon.
\end{align}
\end{subequations}
Our goal is to construct $\ket{\Psi_n} = \otimes_{i=1}^n\ket{\psi_i}$
such that $\lambda^{\Psi}_n > 1-\delta$,
and at the same time $\sum_{i=1}^{\infty} S(\psi^{A}_i) = \infty$.

Let $\ket{\psi_i} = \sqrt{1 - p_i} \ket{00} + \sqrt{p_i} \ket{11}$,
where the sequence $0<p_i<1$ is such that
\begin{subequations}
\begin{align}
    \label{eq:infProd}
    \prod \limits_{i=1}^{\infty} (1-p_i) > 1-\delta,  \\
    \label{eq:divseriesEntropies}
   - \sum \limits_{i=1}^{\infty}  p_i \log_2 p_i = \infty.
\end{align}
\end{subequations}
In order to construct such a sequence, we recall that
the series of positive numbers:
$\sum_{k=2}^{\infty} k^{-1} (\log_2 k)^{-(1+u)}$
converges for $u > 0$ and is divergent for $u\leq 0$.
Let us fix $0 < u \leq 1$,
and let $r_1 = \frac{1}{2}$ and
\begin{equation}
    r_k = \frac{1}{k (\log_2 k)^{1+u}},
\end{equation}
for $k=2,3,\ldots$.
Note that if $(a_k)$ is a sequence such that
$\sum_{k=1}^{\infty} |a_k| < \infty$ 
and $a_k>-1$,
then $\sum_{k=1}^{\infty} |\log_2 (1+ a_k)| < \infty$, 
which follows simply from the fact that
$|\log_2 (1+ a_k)| \leq 2(\ln 2)|a_k|$ as long as $|a_k|<\frac{1}{2}$.
By setting $a_k =  - r_k$, 
we get $\sum_{k=1}^{\infty} |\log_2 (1- r_k)| < \infty$,
i.e. $\prod_{k=1}^{\infty} (1-r_k) = C > 0$.
Because the product converges,
there is a number $N$ such that
$\prod_{k=N}^{\infty} (1-r_k) > 1-\delta$.
We set $p_i = r_{i+N}$ and obtain Eq.~\eqref{eq:infProd}.

Eq.~\eqref{eq:divseriesEntropies} follows from a straightforward calculation
(see also \cite{Baccetti_2013}).  
For $k\geq 2$, we have
\begin{eqnarray}
    - r_k \log_2 r_k &=& 
        \frac{1}{k (\log_2 k)^{1+u}} \log_2 \left( k (\log_2 k)^{(1+u)} \right)  \nonumber\\
     &=&\frac{1}{k\, (\log_2 k)^u} + \frac{\log_2 (\log_2 k)^{(1+u)}}{k\, (\log_2 k)^{(1+u)}},
\end{eqnarray}
and the first term makes up a divergent series for $u \leq 1$.
Thus
\begin{equation}
\label{eq:rkDiverges}
    -\sum \limits_{k=2}^{\infty} r_k \log_2 r_k \geq 
        \sum \limits_{k=2}^{\infty}  \frac{1}{ k\, (\log_2 k)^u} = \infty.
\end{equation}
Of course, Eq.~\eqref{eq:rkDiverges} implies Eq.~\eqref{eq:divseriesEntropies}.

Now, because $\delta<\frac{1}{2}$, each $p_i < \frac{1}{2}$, and the square of the largest Schmidt coefficient of the state $\ket{\Psi_n}$ must be 
$\lambda^{\Psi}_n = \prod_{i=1}^{n} ( 1 - p_i) > 1-\delta$.
At the same time, because of Eq.~\eqref{eq:divseriesEntropies}, 
we have that $\sum_{i=1}^{n} S(\psi^{A}_i) = \infty$. Finally, as the entanglement entropy of $\ket{\Psi_n}$ diverges, we can extract an unbounded number of singlets with the help of a catalyst \cite{Kondra2102.11136}. This completes the proof.
\end{proof}

The above theorem demonstrates an advantage of catalysis in settings beyond iid. There exist sequences of two-qubit states $\{\ket{\psi_i}\}$ which do not allow for the extraction of singlets via LOCC without catalysis, as the probability to obtain a singlet (or even any state close to a singlet) is vanishingly small. At the same time, if the remote parties have access to an entangled catalyst, they can in principle extract an unbounded number of singlets from the sequence with certainty. We note that entanglement distillation from non-iid sequences of states has also been discussed in~\cite{Bowen4567558,Buscemi1.3483717,WaeldchenPhysRevLett.116.020502}.

\section{Entanglement catalysis for quantum channels} 

In classical information theory an important development is the Shannon noisy channel coding theorem, where mutual information plays a pivotal role to describe classical capacity of a channel \cite{Shannon_1948,shannon1998mathematical,Covar_1991}. Analogously, in quantum information theory the figure of merit is the quantum capacity~\cite{LloydPhysRevA.55.1613,SchumacherPhysRevA.54.2629,Horodecki_2000}, corresponding to the maximum rate at which a sender can faithfully send qubits to a receiver via a noisy quantum channel. Quantum capacity is closely related to the coherent information of a channel $\Lambda$ and a source state $\rho$~\cite{SchumacherPhysRevA.54.2629,LloydPhysRevA.55.1613}:
\begin{equation}
    I(\rho,\Lambda)=S(\Lambda[\rho])-S(\openone\otimes\Lambda[\ket{\psi_\rho}\!\bra{\psi_{\rho}}]),
\end{equation}
where $\ket{\psi_\rho}$ is some purification of $\rho$. In terms of coherent information, the quantum capacity of the quantum channel $\Lambda$ can be expressed as follows~\cite{LloydPhysRevA.55.1613,Shor_2002,Devetak1377491}:
\begin{equation}\label{definition_quantum_capacity}
    Q(\Lambda)=\lim_{n\rightarrow\infty}\frac{1}{n}\max_{\rho_n}I(\rho_n,\Lambda^{\otimes n}).
\end{equation}
Quantum capacity also coincides with the entanglement generation capacity of a channel \cite{Watrous_2018}. Initially introduced in~\cite{LloydPhysRevA.55.1613}, quantum capacity has been greatly explored for different kinds of channels, such as Pauli channels \cite{Cerf_2000}, bosonic channels~\cite{Holevo_2001, Wolf_2007}, symmetric side channels \cite{Smith_2008}, arbitrarily correlated noise channels \cite{Buscemi_2010}, and low noise channels \cite{Leditzky_2018}. Recently, an experiment has been performed to determine the lower bound to the quantum capacity of two-qubit channels~\cite{Cueves_2017}, which is based on a method to detect lower bounds of quantum capacities \cite{Macchiavello_2016}.
Additionally, LOCC-assisted private and quantum channel capacity has been discussed in \cite{DavisPhysRevA.97.062310}.
For a general overview of the subject we also refer to~\cite{Gyongyosi_2018,Holevo_2020}.

The usual assumption in the study of quantum capacity is that the parties can use many copies of the same channel in parallel. In contrast to this, here we assume that Alice and Bob can use the quantum channel only once and can
have unlimited classical communication. This implies that (in general) they will not be able to send qubits perfectly through the channel. However, as we will now see, the situation changes completely if Alice and Bob can additionally use catalytic LOCC. We assume that Alice and Bob initially share a catalyst\footnote{To simplify the notation, in this section the catalyst system will be denoted by $AB$, and the primary system will be called $A'B'$.} in the state $\tau_n^{AB}$. Alice and Bob are further in possession of registers $A'$ and $B'$ of the same dimension $d_{A'} = d_{B'}$, initialized in the states $\ket{0}\!\bra{0}^{A'}$ and $\ket{0}\!\bra{0}^{B'}$. The aim of the procedure will be to entangle these registers $A'$ and $B'$. For achieving this, Alice and Bob can use a carrier particle $C$ initialized in the state $\ket{0}\!\bra{0}^C$, which can be sent from Alice to Bob via the noisy channel $\Lambda$. The overall initial state is given by 
\begin{equation}
    \sigma_n = \sigma_n^{ABA'B'C} = \tau_n^{AB}\otimes \ket{0}\!\bra{0}^{A'} \otimes \ket{0}\!\bra{0}^{B'} \otimes \ket{0}\!\bra{0}^C. \label{eq:InitialState}
\end{equation}
Consider now a protocol consisting of the following steps.
\begin{enumerate}
\item Preprocessing: Alice and Bob apply an LOCC protocol to $\sigma_n$, with the resulting state $\nu_n$. 
\item Alice uses the channel $\Lambda^C$ to send the carrier particle to Bob. The resulting state is $\chi_n = \Lambda^C[\nu_n]$, where the particle $C$ is now in Bob's lab.
\item Postprocessing: Alice and Bob apply an LOCC protocol to $\chi_n$, resulting in the state $\mu_n$. 
\end{enumerate}

\noindent For the final state $\mu_n$, we require that the state of the catalyst is returned unchanged for each $n$: 
\begin{align}
 & \mathrm{Tr}_{A'B'}\left[\mu_{n}^{ABA'B'}\right]=\tau_{n}^{AB}. \label{eq:Catalyst} 
\end{align}
This requirement is analogous to Eq.~(\ref{eq:CatalyticLOCC1}) for catalytic state transformations, implying that the catalyst can be reused. In the same way as in Eq.~(\ref{eq:CatalyticLOCC2}) we also require that the catalyst decouples from $A'B'$ for large $n$: 
\begin{align}
 \lim_{n\rightarrow\infty}\left\Vert \mu_{n}^{ABA'B'}-\tau_{n}^{AB}\otimes\mu_{n}^{A'B'}\right\Vert _{1}=0. \label{eq:Decoupling}
\end{align}
This requirement limits the amount of correlations between different systems, if they are transformed with the help of the same catalyst.

\begin{figure*}
\centering
\includegraphics[width=1.5\columnwidth]{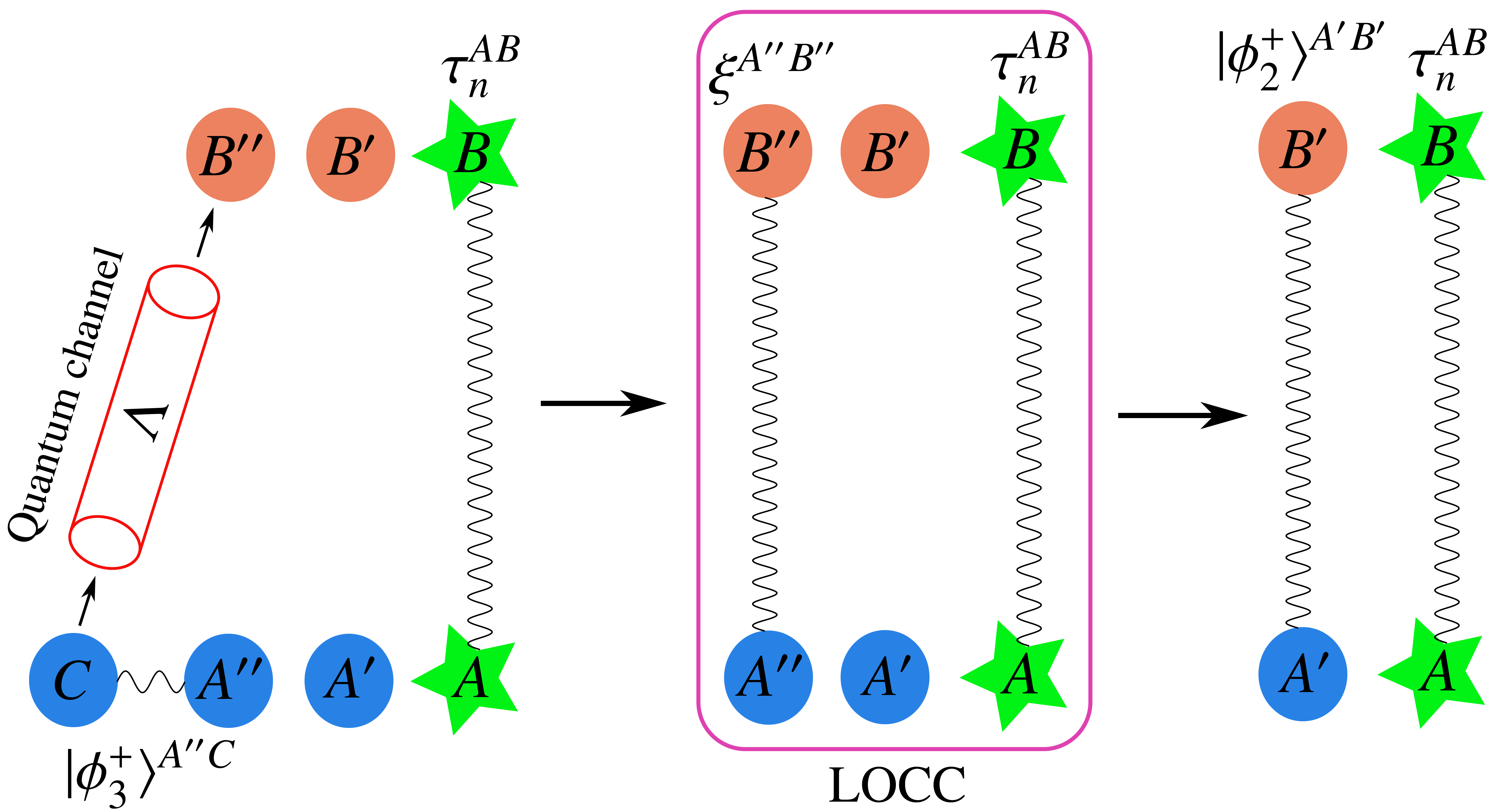}
\caption{Alice prepares a maximally entangled two-qutrit state $\ket{\phi_3^+}^{A''C}$ and sends the particle $C$ to Bob through a noisy channel $\Lambda$, resulting in the state $\xi = \openone \otimes \Lambda (\ket{\phi_3^+}\!\bra{\phi_3^+})$. Additionally, the parties have access to a catalyst in the state $\tau_n^{AB}$. Applying catalytic LOCC Alice and Bob can obtain a singlet when the distillable entanglement of $\xi$ is at least one.}
\label{fig:catalytic_channel}
\end{figure*}

The protocol just presented is the most general procedure which Alice and Bob can apply, if they have access to catalytic LOCC, and a quantum channel $\Lambda$ which can be used only once. A case of special interest arises if the state $\mu_n^{A'B'}$ established in this way is close to a maximally entangled state of $m$ qubits, i.e, 
\begin{equation} \label{eq:ChannelCatalytic}
\lim_{n\rightarrow\infty}\left\Vert \mu_{n}-\tau_{n}^{AB} \otimes \ket{\phi^+_{2^m}}\!\bra{\phi^+_{2^m}}^{A'B'} \right\Vert _{1}=0.
\end{equation}
Since $\ket{\phi^+_{2^m}}$ can be used to teleport $m$ qubits, Alice and Bob can use the procedure to send $m$ qubits with arbitrary accuracy. If Eq.~(\ref{eq:ChannelCatalytic}) is fulfilled for some $m \geq 1$, we say that the channel $\Lambda$ can transmit $m$ qubits.  

Equipped with these tools, we are now ready to define the catalytic capacity $Q_c$ of a quantum channel $\Lambda$ as the maximal number of qubits which the channel can transmit. In more detail, let $\{\tau_n^{AB}\}$ be a suitably chosen sequence of catalyst states, and $\{\mu_n\}$ is a sequence of total final states such that Eqs.~(\ref{eq:Catalyst}) and~(\ref{eq:ChannelCatalytic}) are fulfilled for some $m \geq 1$. The catalytic capacity of $\Lambda$ is the largest possible value of $m$:
\begin{widetext}
\begin{equation}
    Q_{c}(\Lambda)=\max\left\{ m:\lim_{n\rightarrow\infty}\left\Vert \mu_{n}-\tau_{n}^{AB}\otimes\ket{\phi_{2^{m}}^{+}}\!\bra{\phi_{2^{m}}^{+}}^{A'B'}\right\Vert _{1}=0\right\}. \label{eq:CatalyticQuantumCapacity}
\end{equation}
\end{widetext}
If Eq.~(\ref{eq:ChannelCatalytic}) cannot be fulfilled for any $m \geq 1$, we set $Q_c(\Lambda) = 0$. In this context, it is worth noting a recent study \cite{Lipka-Bartosik2102.11846} that looks into the prospect of using catalysis to improve teleportation fidelity. We go into further detail about this in Appendix \ref{appendix_comparison}.

The main differences between the standard quantum capacity and the catalytic capacity are as follows: (i) The definition of standard quantum capacity does not involve any classical communication between the parties, while unlimited classical communication is allowed in the definition of catalytic capacity. (ii) Standard quantum capacity is defined in the limit of large number of copies of the channel, whereas catalytic capacity is defined for a single application of the channel.

We will now consider a concrete example to elucidate the above concept of catalytic capacity, which is also illustrated in Fig.~\ref{fig:catalytic_channel}. We assume that the system $A$ is a qubit, and the carrier particle $C$ is a qutrit. Recalling that the carrier particle is initially in possession of Alice, she can use the preprocessing step to entangle $C$ with an additional qutrit $A''$, creating a maximally entangled two-qutrit state $\ket{\phi_3^+}^{A'' C}$ locally. The particle $C$ is then sent to Bob via the quantum channel $\Lambda$. In this way, Alice and Bob end up sharing the noisy two-qutrit state $\xi = \openone \otimes \Lambda (\ket{\phi_3^+}\!\bra{\phi_3^+})$. So far, the procedure did not make any use of the catalyst. In the postprocessing step, Alice and Bob apply catalytic LOCC to convert $\xi$ into a Bell state. This is possible whenever $\xi$ has distillable entanglement larger or equal than one~\cite{Kondra2102.11136}. In particular, Alice and Bob can obtain the state $\mu_n$ such that $\mathrm{Tr}_{A'B'}[\mu_{n}]=\tau_{n}^{AB}$ and additionally
\begin{equation} \label{eq:ChannelCatalytic2}
\lim_{n\rightarrow\infty}\left\Vert \mu_{n}-\tau_{n}^{AB} \otimes \ket{\phi_2^+}\!\bra{\phi_2^+}^{A'B'} \right\Vert _{1}=0.
\end{equation}
Thus, Alice and Bob can establish a state $\mu_n^{A'B'} \approx \ket{\phi_2^+}\!\bra{\phi_2^+}^{A'B'}$.

The procedure just described shows that a noisy qutrit channel $\Lambda$ can transmit a qubit whenever $E_\mathrm d(\openone \otimes \Lambda [\ket{\phi^+_3}\!\bra{\phi^+_3}]) \geq 1$, where $E_\mathrm d$ is the distillable entanglement. Note that the use of a maximally entangled state is not crucial in this procedure. In fact, instead of creating the two-qutrit state $\ket{\phi^+_3}$, Alice can locally create any other two-qutrit state $\rho$. The procedure will work in the same way, as long as the distillable entanglement of $\openone \otimes \Lambda [\rho]$ is at least one. By the same arguments, the channel $\Lambda$ can transmit $m$ qubits whenever there exists a bipartite state $\rho$ such that \cite{ganardi2023catalytic,Lipka-Bartosik2102.11846}
\begin{equation}
    E_\mathrm d (\openone \otimes \Lambda [\rho]) \geq m. \label{eq:EdM}
\end{equation}

Recall now that the distillable entanglement is bounded below as follows~\cite{Devetak2005}: $E_\mathrm d (\rho^{AB}) \geq S(\rho^A) - S(\rho^{AB})$. Together with Eq.~(\ref{eq:EdM}) we see that a quantum channel can transmit $m$ qubits if the following condition is fulfilled:
\begin{equation}
    S\left(\openone \otimes \Lambda \left[\ket{\phi^+_d}\!\bra{\phi^+_d}\right]\right) \leq \log_2 d - m. \label{eq:MaximallyEntangledCondition}
\end{equation}
Here, $d$ is the dimension of the Hilbert space on which $\Lambda$ is acting. Assuming again that $d=3$, we see that all qutrit channels which are not ``too noisy'' can transmit a qubit as long as $S(\openone\otimes\Lambda[\ket{\phi_{3}^{+}}\!\bra{\phi_{3}^{+}}])\leq0.58$. It is important to note that in this example (and the examples later), for the purpose of simplicity, we regard the pre-processing stage to be Alice preparing a suitable state. However, as previously introduced, the pre-processing stage can generally involve an LOCC protocol between Alice and Bob.

We will now discuss the converse, providing methods to show when a channel cannot send a qubit perfectly. As we will see, a channel can transmit a qubit only if it can transmit one unit of entanglement. As a quantifier of entanglement we use the squashed entanglement defined as~\cite{Christandl_2004}
\begin{multline}
    E_\mathrm{sq}(\rho^{AB}) = \\ \inf \left\{ \frac{1}{2} I(A;B|E):\rho^{ABE} \mathrm{\,\,extension\,\,of\,\,} \rho^{AB} \right\},
\end{multline}
with the quantum conditional mutual information $I(A;B|E) = S(\rho^{AE}) + S(\rho^{BE}) - S(\rho^{ABE})-S(\rho^E)$. For a quantum channel $\Lambda$ we now define the amount of transmitted entanglement as follows:
\begin{multline}
    \Delta E_\mathrm{sq} (\Lambda) = \\ \sup_{\rho^{ABC}} \left\{ E^{A|BC}_\mathrm{sq}(\Lambda^C[\rho^{ABC}]) - E^{AC|B}_\mathrm{sq}(\rho^{ABC}) \right\}, \label{eq:TransmittedEntanglement}
\end{multline}
and the supremum is taken over all tripartite states $\rho^{ABC}$. Equipped with these tools, we are now ready to prove the following theorem.
\begin{theorem} \label{thm:Esq}
The catalytic capacity of a channel $\Lambda$ is bounded above as
\begin{equation}
    Q_c(\Lambda) \leq \Delta E_\mathrm{sq} (\Lambda).
\end{equation}
\end{theorem}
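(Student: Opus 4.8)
The plan is to track the squashed entanglement across the Alice--Bob cut through the three stages of the protocol (preprocessing, channel use, postprocessing) and to match the gain incurred by moving the carrier $C$ from Alice's side to Bob's side against the quantity $\Delta E_\mathrm{sq}(\Lambda)$. Throughout I would invoke four standard properties of squashed entanglement: monotonicity under LOCC, additivity on tensor products, superadditivity $E_\mathrm{sq}^{A_1A_2|B_1B_2}(\rho)\geq E_\mathrm{sq}^{A_1|B_1}(\rho^{A_1B_1})+E_\mathrm{sq}^{A_2|B_2}(\rho^{A_2B_2})$, and asymptotic continuity on a system of \emph{fixed} dimension; for pure states $E_\mathrm{sq}$ equals the entanglement entropy, so that $E_\mathrm{sq}(\ket{\phi_{2^m}^+})=m$.

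First I would establish the upper bound. Before the channel is used Alice holds $AA'C$ and Bob holds $BB'$, so the preprocessing is LOCC across $AA'C\,|\,BB'$. Since $\sigma_n$ factorizes as $\tau_n^{AB}$ tensored with product states on $A'B'C$, additivity gives $E_\mathrm{sq}^{AA'C|BB'}(\sigma_n)=E_\mathrm{sq}^{A|B}(\tau_n)$, and LOCC monotonicity yields $E_\mathrm{sq}^{AA'C|BB'}(\nu_n)\leq E_\mathrm{sq}^{A|B}(\tau_n)$. Applying the definition~(\ref{eq:TransmittedEntanglement}) of $\Delta E_\mathrm{sq}(\Lambda)$ with the identifications $A\mapsto AA'$, $B\mapsto BB'$, $C\mapsto C$ and the input state $\nu_n$ gives $E_\mathrm{sq}^{AA'|BB'C}(\chi_n)\leq E_\mathrm{sq}^{AA'C|BB'}(\nu_n)+\Delta E_\mathrm{sq}(\Lambda)$. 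After the channel Alice holds $AA'$ and Bob holds $BB'C$; the postprocessing is LOCC across this cut (discarding $C$ is a local operation on Bob's side), so altogether
$$E_\mathrm{sq}^{AA'|BB'}(\mu_n)\leq E_\mathrm{sq}^{A|B}(\tau_n)+\Delta E_\mathrm{sq}(\Lambda).$$

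The key step is the matching lower bound, obtained from superadditivity applied to $\mu_n$ with respect to the split $(A,B)$ against $(A',B')$,
$$E_\mathrm{sq}^{AA'|BB'}(\mu_n)\geq E_\mathrm{sq}^{A|B}(\mu_n^{AB})+E_\mathrm{sq}^{A'|B'}(\mu_n^{A'B'}).$$
The catalyst constraint $\mathrm{Tr}_{A'B'}[\mu_n]=\tau_n$ makes the first term exactly $E_\mathrm{sq}^{A|B}(\tau_n)$, while condition~(\ref{eq:ChannelCatalytic}) gives $\Vert\mu_n^{A'B'}-\ket{\phi_{2^m}^+}\!\bra{\phi_{2^m}^+}\Vert_1\to 0$. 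Because $A'B'$ has the fixed dimension $2^m$, continuity forces $E_\mathrm{sq}^{A'|B'}(\mu_n^{A'B'})\geq m-\delta_n$ with $\delta_n\to 0$. Combining the two bounds and cancelling the quantity $E_\mathrm{sq}^{A|B}(\tau_n)$, which is finite for each $n$, gives $m-\delta_n\leq\Delta E_\mathrm{sq}(\Lambda)$; letting $n\to\infty$ yields $m\leq\Delta E_\mathrm{sq}(\Lambda)$ for every achievable $m$, hence $Q_c(\Lambda)\leq\Delta E_\mathrm{sq}(\Lambda)$.

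The main obstacle is that the catalyst $\tau_n$ has unbounded dimension as $n\to\infty$ (as shown earlier in the paper), so a direct application of the dimension-dependent continuity bound to the full states $\mu_n$ and $\tau_n\otimes\ket{\phi_{2^m}^+}\!\bra{\phi_{2^m}^+}$ would be useless. Superadditivity is exactly what circumvents this: it isolates the catalyst's contribution into a marginal fixed \emph{exactly} by the constraint~(\ref{eq:Catalyst}), leaving continuity to be invoked only on the fixed-dimensional register $A'B'$. The care needed is to verify that additivity, superadditivity and LOCC monotonicity are each applied with respect to the correct bipartite cut at every stage, and that discarding the carrier during postprocessing is properly accounted for as a local operation on Bob's side.
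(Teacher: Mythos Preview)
Your proposal is correct and follows essentially the same route as the paper's proof: LOCC monotonicity before and after the channel, the definition of $\Delta E_\mathrm{sq}$ to bound the gain when $C$ crosses the cut, superadditivity plus the exact catalyst constraint to peel off $E_\mathrm{sq}(\tau_n)$, and continuity on the fixed-dimension register $A'B'$. The paper additionally makes explicit that the preprocessing LOCC may attach local ancillas (writing $\tilde A,\tilde B$), but this is a cosmetic difference absorbed by your use of LOCC monotonicity.
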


\begin{proof}
    We allow Alice and Bob to use the most general communication protocol as described below Eq.~(\ref{eq:InitialState}). In the preprocessing step, Alice and Bob apply LOCC to the state $\sigma_n$. In a general LOCC protocol Alice and Bob can also attach local systems. We thus denote the state after the preprocessing by $\nu_n = \nu_n^{\tilde A \tilde B C}$, where $\tilde A$ includes $AA'$, and all additional particles that Alice has attached locally, and similar for $\tilde B$. Noting that the squashed entanglement does not increase under LOCC~\cite{Christandl_2004}, we have 
\begin{equation}
    E_\mathrm{sq}^{\tilde A C |\tilde B}(\nu_n) \leq E_\mathrm{sq}^{AA'C |BB'}(\sigma_n).
\end{equation}
In the second step of the protocol, Alice sends the carrier particle $C$ to Bob, using the quantum channel $\Lambda^C$. The resulting state is $\chi_n^{\tilde A \tilde B C} = \Lambda^C[\nu_n^{\tilde A \tilde B C}]$. Note that the squashed entanglement between Alice and Bob can increase in this process, and the increase is bounded by $\Delta E_\mathrm{sq} (\Lambda)$:
\begin{equation}
    E_\mathrm{sq}^{\tilde A | \tilde B C} (\chi_n) - E_\mathrm{sq}^{\tilde A C | \tilde B} (\nu_n) \leq \Delta E_\mathrm{sq} (\Lambda).
\end{equation}
In the postprocessing step, an LOCC protocol is applied to $\chi_n^{\tilde A \tilde B C}$. In this step, Alice and Bob also discard all particles apart from $AA'BB'$, and the final state is denoted by $\mu_n = \mu_n^{AA'BB'}$. Again using the fact that the squashed entanglement does not increase under LOCC we have
\begin{equation}
    E_\mathrm{sq}^{AA'|BB'}(\mu_n) \leq E_\mathrm{sq}^{\tilde A|\tilde B C} (\chi_n).
\end{equation}
Combining these arguments, we see that the overall increase of entanglement in this procedure is bounded as follows: 
\begin{align}
&E_\mathrm{sq}^{AA'|BB'}(\mu_{n})-E_\mathrm{sq}^{AA'C|BB'}(\sigma_{n})  \nonumber\\ &\leq E_\mathrm{sq}^{\tilde{A}|\tilde{B}C}(\chi_{n})-E_\mathrm{sq}^{\tilde{A}C|\tilde{B}}(\nu_{n}) 
  \leq\Delta E_\mathrm{sq}(\Lambda). \label{eq:ConverseProof}
\end{align}

Recall now that $\mu_n^{AB} = \tau_n^{AB}$ is the state of the catalyst, which does not change in this procedure. Moreover, by the properties of the squashed entanglement~\cite{Christandl_2004} we have
\begin{align}
&E_\mathrm{sq}^{AA'|BB'}(\mu_{n}^{AA'BB'})  \geq E_\mathrm{sq}^{A|B}(\mu_{n}^{AB})+E_\mathrm{sq}^{A'|B'}(\mu_{n}^{A'B'})\nonumber\\   & \qquad\qquad\qquad\qquad\,=E_\mathrm{sq}^{A|B}(\tau_{n}^{AB})+E_\mathrm{sq}^{A'|B'}(\mu_{n}^{A'B'}),\\
&E_\mathrm{sq}^{AA'C|BB'}(\sigma_{n})  =E_\mathrm{sq}^{A|B}(\tau_n^{AB}).
\end{align}
Together with Eq.~(\ref{eq:ConverseProof}) we obtain
\begin{equation}
    E_\mathrm{sq}^{A'|B'}(\mu_n^{A'B'}) \leq \Delta E_\mathrm{sq}(\Lambda).
\end{equation}

If the channel $\Lambda$ can transmit $m$ qubits, then the state $\mu_n^{A'B'}$ can be made arbitrarily close to $\ket{\phi^+_{2^m}}$. Using the continuity of the squashed entanglement~\cite{Alicki_2004} together with $E_\mathrm{sq}(\ket{\phi^+_{2^m}})=m$ we obtain $m \leq \Delta E_\mathrm{sq}(\Lambda)$, and the proof is complete.
\end{proof}

An immediate consequence of Theorem~\ref{thm:Esq} is that entanglement breaking channels have zero catalytic capacity. Since the action of any entanglement breaking channel can be simulated by LOCC~\cite{HorodeckiEntanglementBreakingChannels}, it follows that $\Delta E_{\mathrm{sq}}(\Lambda)=0$ whenever $\Lambda$ is entanglement breaking. Moreover, since catalytic capacity is an integer, it follows that any channel with $\Delta E_{\mathrm{sq}}(\Lambda)<1$ has zero catalytic capacity. This also holds for channels which -- in principle -- can establish entanglement, but which cannot establish a singlet, even when entangled catalysts are used.

\section{Applications for entanglement catalysis of noisy quantum channels}

While the evaluation of the transmitted squashed entanglement $\Delta E_\mathrm{sq}$ is very challenging in general, in this section we will see that for certain types of quantum channels it is possible to obtain computable upper bounds, thus leading to simple upper bounds on the catalytic capacity. 

A general quantum channel $\Lambda$ acts on one part of a bipartite quantum state $\rho$ as follows:
\begin{equation}
\openone\otimes\Lambda(\rho)=\sum_{i=1}^{k}(\openone\otimes K_{i})\rho(\openone\otimes K_{i}^{\dagger}),
\end{equation}
where $\sum_{i=1}^k K_i^\dagger K_i=\openone$ and $k$ is the minimal number of Kraus operators. The final state  $\openone\otimes\Lambda(\rho)$ can also be expressed as 
\begin{equation}
    \openone\otimes\Lambda(\rho)=\sum_i p_i \sigma_i,
\end{equation}
with probabilities $p_i$ and states $\sigma_i$ given as
\begin{subequations}
\begin{align}
p_{i} & =\mbox{Tr}\left[(\openone\otimes K_{i})\rho(\openone\otimes K_{i}^{\dagger})\right],\\
\sigma_{i} & \ensuremath{=\frac{(\openone\otimes K_{i})\rho(\openone\otimes K_{i}^{\dagger})}{p_{i}}}.
\end{align}
\end{subequations}
Assume now that the initial state $\rho$ is maximally entangled, i.e., $\rho = \ket{\phi^+_d}\!\bra{\phi^+_d}$, where $d$ is the dimension of the Hilbert space on which $\Lambda$ is acting. All states $\sigma_i$ are pure in this case, and it holds that 
\begin{equation}
    S(\openone\otimes\Lambda[\ket{\phi^+_d}\!\bra{\phi^+_d}]) \leq H(p_i).
\end{equation}
with $H(p_{i})=-\sum p_{i}\log_{2}p_{i}$. Recall that the channel $\Lambda$ can transmit $m$ qubits if Eq.~(\ref{eq:MaximallyEntangledCondition}) is fulfilled. This means that the channel can transmit $m$ qubits whenever 
\begin{equation}
    H(p_i) \leq \log_2d-m. \label{eq:Kraus}
\end{equation}
From these results we obtain a lower bound on the catalytic capacity:
\begin{equation}
Q_{c}(\Lambda)\geq\left\lfloor \log_{2}d-H(p_{i})\right\rfloor.
\end{equation}
As a consequence, any quantum channel $\Lambda$ of dimension $d \geq 4$ can transmit at least one qubit if the channel can be decomposed into (at most) $2$ Kraus operators.

As an example, consider a quantum channel of the form 
\begin{equation}
    \Lambda[\rho] = (1-p) \rho + p U \rho U^\dagger
\end{equation}
with some unitary $U$, and $p\in[0,1/2]$. From the above discussion it follows that $\Lambda$ can transmit $m$ qubits when the following inequality is fulfilled:
\begin{equation}
    h(p) \leq \log_2 d - m. \label{eq:Example1}
\end{equation}
For $d=3$ and $m=1$ this condition is fulfilled for $p\in[0,0.1403]$. As noted above, for $d \geq 4$ and $m=1$ the inequality in Eq.~(\ref{eq:Example1}) is always true, which means that any such channel can perfectly transmit a single qubit. 

So far we have shown that various quantum channels can perfectly transmit qubits in the presence of entangled catalysts. We will now demonstrate how to use the converse in Theorem \ref{thm:Esq}, which allows to establish upper bounds on the catalytic capacity. For this,
we need to evaluate the transmitted entanglement $\Delta E_\mathrm{sq}$, which is very challenging in general. However, for some channels the expression in Eq.~(\ref{eq:TransmittedEntanglement}) can be simplified significantly. Note that $\Delta E_\mathrm{sq}$ quantifies how much entanglement can be distributed via a given quantum channel $\Lambda$. According to Theorem 8 in~\cite{PhysRevA.92.012335}, for a single-qubit Pauli channel 
\begin{equation}
    \Lambda_p[\rho] = \sum_{i=0}^3 p_i \sigma_i \rho \sigma_i
\end{equation}
the optimal way to distribute entanglement is to send one half of a Bell state through the channel: 
\begin{equation}
    \Delta E_\mathrm{sq}\left(\Lambda_{p}\right)=E_\mathrm{sq}\left(\openone \otimes \Lambda_{p}\left[\ket{\phi_{2}^{+}}\!\bra{\phi_{2}^{+}}\right]\right).
\end{equation}
Similarly, if the quantum channel is a tensor product of (possibly different) Pauli channels, the optimal entanglement distribution procedure is again to send one half of a maximally entangled state. In particular, for a two-qubit channel of the form $\Lambda_p \otimes \Lambda_p$ we have 
\begin{equation}
    \Delta E_\mathrm{sq}\left(\Lambda_{p}\otimes\Lambda_{p}\right)=2E_\mathrm{sq}\left(\openone \otimes \Lambda_{p}\left[\ket{\phi_{2}^{+}}\!\bra{\phi_{2}^{+}}\right]\right). \label{eq:Pauli}
\end{equation}

Note now that the squashed entanglement is bounded above by the entanglement of formation $E_f$~\cite{Christandl_2004}. For pure states $E_f$ is defined as the entanglement entropy: $E_f(\ket{\psi}^{AB}) = S(\psi^A)$. For mixed states, $E_f$ corresponds to the minimal average entanglement of the state~\cite{BennettPhysRevA.54.3824}: $E_f(\rho) = \min \sum_i p_i E_f(\ket{\psi_i})$, where the minimum is taken over all pure state decompositions of the state $\rho$. From Eq.~(\ref{eq:Pauli}) and the fact that $E_\mathrm{sq}$ is bounded from above by the entanglement of formation we obtain
\begin{equation}\label{squashed_formation}
    \Delta E_\mathrm{sq}\left(\Lambda_{p}\otimes\Lambda_{p}\right) \leq 2E_f\left(\openone \otimes \Lambda_{p}\left[\ket{\phi_{2}^{+}}\!\bra{\phi_{2}^{+}}\right]\right).
\end{equation}
Thus, from Theorem~\ref{thm:Esq} we see that the two-qubit channel $\Lambda_p \otimes \Lambda_p$ has zero catalytic capacity if
\begin{equation}
    E_f\left(\openone \otimes \Lambda_{p}\left[\ket{\phi_{2}^{+}}\!\bra{\phi_{2}^{+}}\right]\right) < \frac{1}{2}. \label{eq:PauliEf}
\end{equation}
Since the entanglement of formation has a closed expression for all states of two qubits~\cite{WoottersPhysRevLett.80.2245}, it is straightforward to check if a given Pauli channel fulfills Eq.~(\ref{eq:PauliEf}). In particular, note that $\openone \otimes \Lambda_{p}\left[\ket{\phi_{2}^{+}}\!\bra{\phi_{2}^{+}}\right]$ is diagonal in the Bell basis, with entanglement of formation given by~\cite{BennettPhysRevA.54.3824}
\begin{multline}\label{expression_formation}
    E_f\left(\openone \otimes \Lambda_{p}\left[\ket{\phi_{2}^{+}}\!\bra{\phi_{2}^{+}}\right]\right) \\= h\left(\frac{1}{2} + \sqrt{p_{\max} (1-p_{\max})}\right)
\end{multline}
for $p_{\max} > 1/2$ and $E_f = 0$ otherwise, where $p_{\max}=\max\{p_i\}$. This means that $\Lambda_p \otimes \Lambda_p$ has zero catalytic capacity for $p_{\max} < 0.813$.

We will now extend our results to $n$ copies of a single-qubit Pauli channel $\Lambda_p$. Using the equality
\begin{multline}
S\left(\openone\otimes\Lambda_{p}^{\otimes n}\left[\ket{\phi_{2^{n}}^{+}}\!\bra{\phi_{2^{n}}^{+}}\right]\right)\\=nS\left(\openone\otimes\Lambda_{p}\left[\ket{\phi_{2}^{+}}\!\bra{\phi_{2}^{+}}\right]\right)=nH(p_{i})
\end{multline}
together with Eq.~(\ref{eq:MaximallyEntangledCondition}), we see that the total channel $\Lambda_p^{\otimes n}$ can transmit $m$ qubits whenever
\begin{equation}
H(p_{i})\leq1-\frac{m}{n}. \label{eq:n_pauli_m}
\end{equation}
As we show in Appendix \ref{sec:maxEntropy}, for a given $p_{\max} =\max \{p_k\}$ the maximum value of $H(p_i)$ is achieved for $p_0=p_{\max}$, $p_1=p_2=p_3=(1-p_{\max})/3$. Therefore, if Eq. (\ref{eq:n_pauli_m}) is satisfied for this distribution, then it holds true for any probability distribution with the same $p_{\max}$. In Fig. \ref{fig:max_p_vs_n}, we show values of $p_{\max}$ which allow to send a single qubit perfectly as a function of $n$.  As an example, for $n=2$, we can send a qubit perfectly when $p_{\max}>0.926$, resulting in $Q_c(\Lambda_p\otimes\Lambda_p)=1$ for $0.926<p_{\max}<1$. Moreover, our results imply that for any $p_{\max}>0.8107$ there exists some $n$ such that $\Lambda_p^{\otimes n}$ can transmit (at least) one qubit perfectly.

We will now provide converse bounds for $n$ copies of a Pauli channel, showing when the channel cannot send a certain number of qubits perfectly. Using results from~\cite{PhysRevA.92.012335} (in particular Theorem 8 there), we can generalize Eq.~(\ref{eq:Pauli}) as follows:
\begin{equation}
\Delta E_{\mathrm{sq}}\left(\Lambda_{p}^{\otimes n}\right)=nE_{\mathrm{sq}}\left(\openone \otimes \Lambda_{p}\left[\ket{\phi_{2}^{+}}\!\bra{\phi_{2}^{+}}\right]\right).
\end{equation}
From Theorem~\ref{thm:Esq} we see that the channel $\Lambda_p^{\otimes n}$ can transmit $m$ qubits only if 
\begin{equation}
E_{\mathrm{sq}}\left(\openone \otimes \Lambda_{p}\left[\ket{\phi_{2}^{+}}\!\bra{\phi_{2}^{+}}\right]\right)\geq\frac{m}{n}. \label{eq:EsqNpauli}
\end{equation}
Recalling that the squashed entanglement is bounded above by the entanglement of formation, we further obtain
\begin{equation}
E_f\left(\openone \otimes \Lambda_{p}\left[\ket{\phi_{2}^{+}}\!\bra{\phi_{2}^{+}}\right]\right)\geq\frac{m}{n}. \label{eq:EfNpauli}
\end{equation}
If this condition is violated, then it is not possible to transmit $m$ qubits with $n$ copies of $\Lambda_p$. Using Eq.~(\ref{expression_formation}), it is possible to obtain values of $p_{\max}$ for which this condition is violated. In Fig.~\ref{fig:max_p_vs_n} we show the values of $p_{\max}$ for which no qubit can be transmitted by using $n$ copies of $\Lambda_p$, which implies that the catalytic capacity is zero in those cases.

Since entanglement of formation has a closed expression for all states of two qubits~\cite{WoottersPhysRevLett.80.2245}, the condition in Eq.~(\ref{eq:EfNpauli}) is easily verifiable. As we will now see, an improved numerical condition can be obtained from Eq.~(\ref{eq:EsqNpauli}), using a numerical approximation of the squashed entanglement. We will demonstrate this explicitly for a Pauli channel of the form
\begin{equation}
\Lambda[\rho]=p\rho+(1-p)\sigma_{z}\rho\sigma_{z}. \label{eq:PauliNumerics}
\end{equation}
When applied on one half of a maximally entangled state we obtain 
\begin{equation}
\openone\otimes\Lambda[\ket{\phi_{2}^{+}}\!\bra{\phi_{2}^{+}}]=p\ket{\phi_{2}^{+}}\!\bra{\phi_{2}^{+}}+(1-p)\ket{\phi_{2}^{-}}\!\bra{\phi_{2}^{-}}. \label{eq:EsqNumerics}
\end{equation}
A purification of this state can now be defined as follows:
\begin{multline}
\ket{\psi}^{ABE_{1}E_{2}}=\openone^{AB}\otimes U^{E_{1}E_{2}}\Big(\sqrt{p}\ket{\phi_{2}^{+}}\otimes\ket{00}\\+\sqrt{1\!-\!p}\ket{\phi_{2}^{-}}\otimes\ket{01}\Big),
\end{multline}
where $E_1$ and $E_2$ are qubit systems, and $U^{E_1E_2}$ is some two-qubit unitary. The reduced state $\rho^{ABE_1}$ is a valid extension of the state in Eq.~(\ref{eq:EsqNumerics}). A numerical upper bound on the squashed entanglement can now be obtained by randomly generating the two-qubit unitaries $U^{E_1E_2}$ and evaluating $I(A;B|E_1)/2$ for the state $\rho^{ABE_1}$. For a million iterations, we find that for $p=0.817$ (see also Fig. \ref{fig:squashed}) the upper bound on the squashed entanglement is 0.49988. Together with Eq.~(\ref{eq:EsqNpauli}), this means that for $p < 0.817$ two copies of the channel cannot transmit a qubit perfectly. This result should be compared to the bound obtainable from Eq.~(\ref{eq:EfNpauli}), which gives $p < 0.813$. Moreover, using the arguments presented above it is straightforward to see that for $p \geq 0.89$ two copies of the channel can be used to transmit a qubit perfectly. Summarizing, two copies of the channel in Eq.~(\ref{eq:PauliNumerics}) can transmit a qubit perfectly for $p \geq 0.89$, and cannot send a qubit perfectly when $p < 0.817$. 

\begin{figure}
\includegraphics[width=\columnwidth]{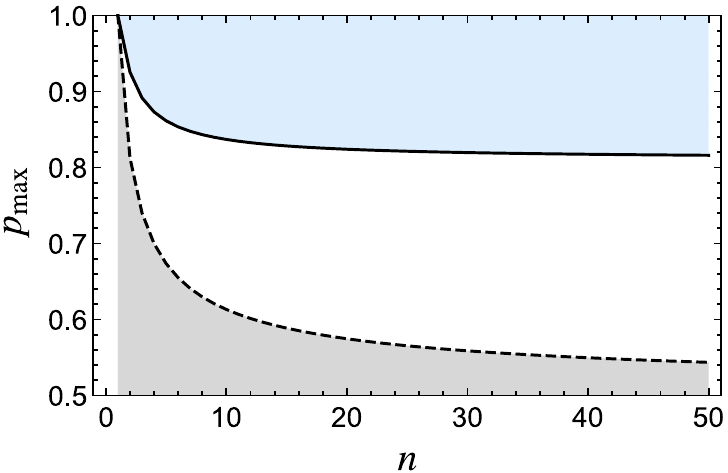}
\caption{Catalytic communication with $n$ Pauli channels. Solid curve shows the values of $p_{\max}$ beyond which the catalytic capacity of $\Lambda_p^{\otimes n}$ is at least one (blue shaded region). Dashed curve shows the values of $p_{\max}$ below which the catalytic capacity of $\Lambda_p^{\otimes n}$ is zero (gray shaded region).}
\label{fig:max_p_vs_n}
\end{figure}

\begin{figure}
\includegraphics[width=\columnwidth]{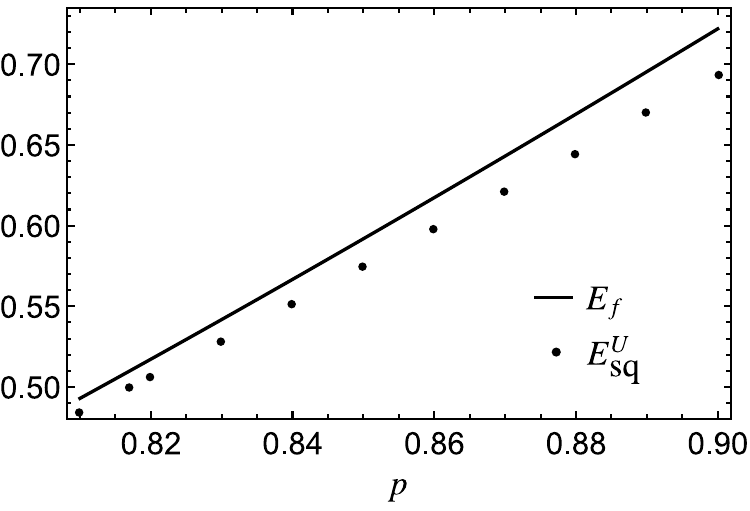}
\caption{Entanglement of formation (solid line) and a numerical upper bound on squashed entanglement (dots) for the state given in Eq. (\ref{eq:EsqNumerics}) as a function of $p$.}
\label{fig:squashed}
\end{figure}

As a final application, we will consider the general amplitude damping channel $\Lambda_{\mathrm{ad}}$ with Kraus operators \cite{DUTTA20162191}
\begin{eqnarray}
   K_0&=&|0\rangle\langle 0|+\sqrt{1-p}\sum_{i=1}^{d-1} |i\rangle\langle i|,\\
   K_m&=&\sqrt{p}|0\rangle\langle m|, \,\, m=\{1,2,\cdots,d-1\}.
\end{eqnarray}
We will now evaluate the number of qubits which the channel can transmit by using Eq.~(\ref{eq:MaximallyEntangledCondition}). Applying the channel onto one half of the maximally entangled state $\ket{\phi^+_d}$ we obtain the state
\begin{align}
&\rho^{AB}=\openone\otimes\Lambda_{\mathrm{ad}}\left(\ket{\phi_{d}^{+}}\!\bra{\phi_{d}^{+}}\right)\nonumber\\
&\quad\,\,\,\,=\frac{\left(d-(d-1)p\right)}{d}|\psi\rangle\langle \psi|+\frac{p}{d}\sum_{m=1}^{d-1}|m0\rangle \langle m0|,
\end{align}
where 
\begin{equation}
    |\psi\rangle=\frac{\sqrt{1-p}}{\sqrt{d-(d-1)p}}\left(\frac{1}{\sqrt{1-p}}|00\rangle+\sum_{m=1}^{d-1}|mm\rangle\right).
\end{equation} 
As $|\psi\rangle$ and $|m0\rangle$ are orthogonal for all $m>0$, the eigenvalues of the state $\rho^{AB}$ are $0$, $\left(d-(d-1)p\right)/d$ and $p/d$ with degeneracy $d(d-1)$, 1 and $(d-1)$ respectively. With this, we can easily calculate $S(\rho^A)-S(\rho^{AB})$ and find the range of $p$ where the channel can transmit $m$ qubits. In Table~\ref{tab:AmplitudeDamingChannel} we show the range of $p$ where $S(\rho^A)-S(\rho^{AB})\geq 1$ for different $d$, implying that a qubit can be send perfectly in this parameter range.


\begin{table}
\begin{center}
\begin{tabular}{| c | c |}
 \hline
$d$ & $p$\\ 
\hline 
3 & < 0.16 \\
\hline 
4 & < 0.25 \\
\hline 
5 & < 0.32 \\
\hline 
6 & < 0.36 \\
\hline 
7 & < 0.40 \\
\hline 
8 & < 0.43 \\
\hline
\end{tabular}
\end{center}
\caption{Parameter range of the general amplitude damping channel allowing for the transmission of one qubit.}\label{tab:AmplitudeDamingChannel}
\end{table}

\section{Enhancing entanglement distribution with catalysis}

Consider a setup where Alice wishes to send one qubit to Bob. For this purpose, Alice has access to quantum channel of length $l$, with an intermediate node able to assist the parties in the procedure, see Fig.~\ref{fig:entanglement_distribution}. We assume that the quantum channel is a single-qubit depolarizing channel 
\begin{equation}
\Lambda_{l}[\rho]=e^{-\alpha l}\rho+(1-e^{-\alpha l})\frac{\openone}{2}
\end{equation}
with a damping parameter $\alpha \geq 0$.
If the intermediate node is not used, the channel can distribute entanglement for $l < \ln 3/\alpha$, and becomes entanglement breaking for $l \geq \ln 3/\alpha$. As we will now see, using the intermediate node in this setup will not change this property: the channel will remain entanglement breaking. For this, consider a subnormalized channel of the form
\begin{equation}
\Lambda'[\rho]=\Lambda_{l-s}\left[K\Lambda_{s}\left(\rho\right)K^{\dagger}\right],
\end{equation}
where $0 \leq s \leq l$ and $K$ is a Kraus operator, i.e., $K^\dagger K \leq \openone$. Moreover, $s$ represents the position of the node, see also Fig.~\ref{fig:entanglement_distribution}. It is straightforward to see that this procedure allows to establish entanglement between Alice and Bob if and only if the state 
\begin{equation}
\sigma=\frac{\openone\otimes\Lambda'[\ket{\phi^{+}}\!\bra{\phi^{+}}]}{\mathrm{Tr(\openone\otimes\Lambda'[\ket{\phi^{+}}\!\bra{\phi^{+}}])}}
\end{equation}
is entangled for some Kraus operator $K$.

\begin{figure}
\includegraphics[width=\columnwidth]{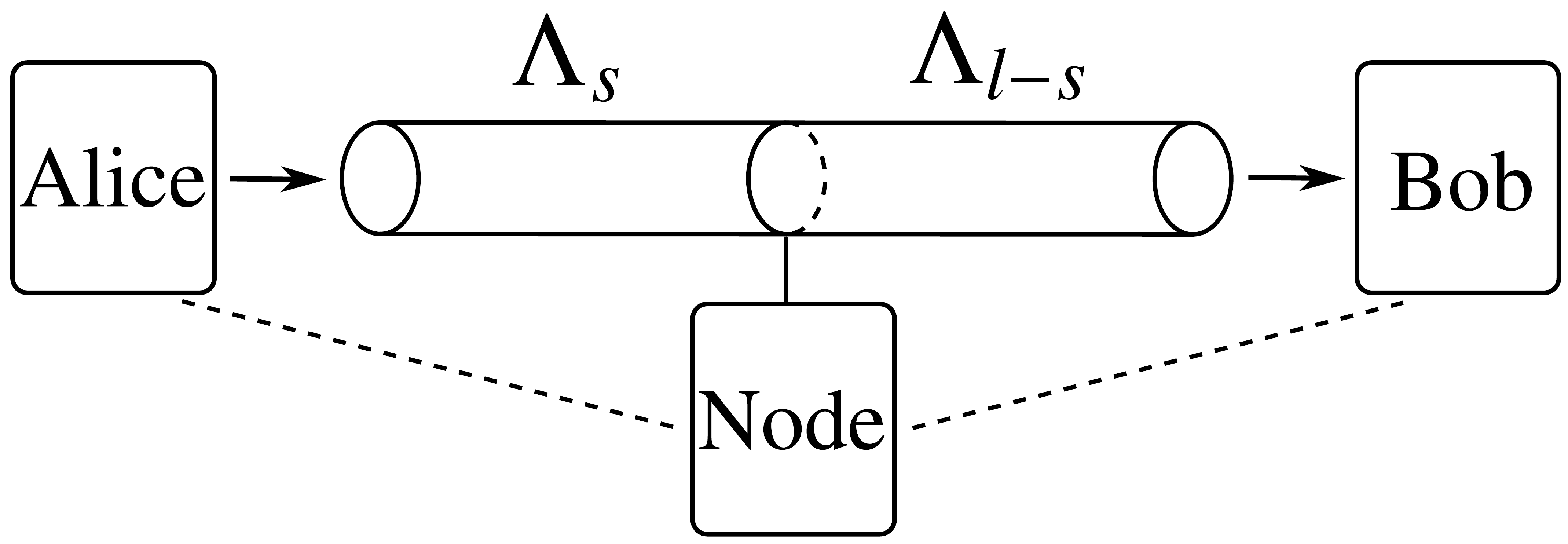}
\caption{Entanglement distribution from Alice to Bob via depolarizing qubit channel of length $l$. Additionally the parties can use an intermediate node having distance $s$ from Alice. The node can perform quantum operations and exchange classical information (dashed lines) with Alice and Bob.}
\label{fig:entanglement_distribution}
\end{figure}

We will now prove that for $l \geq \ln 3/\alpha$ the state $\sigma$ is not entangled for any choice of $K$ and $s$. Note that it is enough to prove the statement for $l = \ln 3/\alpha$. For this, we define
\begin{align}
\ket{\psi}  =\frac{1}{\sqrt{q}}(\openone\otimes K)\ket{\phi^{+}},\\
\mu  =\frac{\left(\openone\otimes K\right)\openone\otimes\Lambda_{s}\left[\phi^{+}\right]\left(\openone\otimes K^{\dagger}\right)}{\mathrm{Tr}\left[\left(\openone\otimes K\right)\openone\otimes\Lambda_{s}\left[\phi^{+}\right]\left(\openone\otimes K^{\dagger}\right)\right]}
\end{align}
with $q = \mathrm{Tr}[KK^\dagger]/2$. Notice that $\psi^B = KK^\dagger/\mathrm{Tr}[KK^\dagger]$. Equipped with these tools we obtain
\begin{align}
\mu = e^{-\alpha s}\ket{\psi}\!\bra{\psi} + (1-e^{-\alpha s})\frac{\openone}{2}\otimes\psi^{B}.
\end{align}
Using these results, we can now express the state $\sigma$ as follows:
\begin{align}
\sigma & =e^{-\alpha l}\ket{\psi}\!\bra{\psi}+e^{-\alpha(l-s)}(1-e^{-\alpha s})\frac{\openone}{2}\otimes\psi^{B}+\nonumber\\
 &\quad\,\, (1-e^{-\alpha(l-s)})e^{-\alpha s}\psi^{A}\otimes\frac{\openone}{2}+\nonumber\\
 &\quad\,\,(1-e^{-\alpha(l-s)})(1-e^{-\alpha s})\frac{\openone}{2}\otimes\frac{\openone}{2}. 
\end{align}
Our goal now is to show that for $l =  \ln 3/\alpha$ the state $\sigma$ is not entangled for any choice of $\ket{\psi}$. Since $\ket{\psi}$ is a two-qubit state, we can set $\ket{\psi} = \cos \beta \ket{00} + \sin \beta \ket{11}$. Recall now that a two-qubit state $\sigma$ is entangled if and only if the determinant of the partially transposed matrix $\sigma^{T_A}$ is negative~\cite{AugusiakPhysRevA.77.030301}. For $l =  \ln 3/\alpha$ and replacing $s$ by $s'/\alpha$, the determinant of $\sigma^{T_A}$ is given by
\begin{align}
    \det (\sigma^{T_A}) = \frac{e^{-4 s'}\cos^2 2\beta}{20736}\Big([f(s')]^2-\nonumber\\    f(s')\left(3+e^{2s'}\right)^2\sin^2 2\beta \Big),
\end{align}
where $f(s')=9-10 e^{2 s'}+e^{4 s'}$. As $f(s')\leq 0$ for $s'=[0,\ln 3]$, $ \det (\sigma^{T_A})$ is nonnegative for all $\beta$. This proves that the setup shown in Fig.~\ref{fig:entanglement_distribution} can distribute entanglement if and only if $l <  \ln 3/\alpha$. 

On the other hand, if entangled catalysts are allowed in the procedure, the setup can distribute entanglement in the range $l < 2\ln 3/\alpha$. To see this, we choose $s = l/2 < \ln 3/\alpha$. Alice first sends one half of a maximally entangled state through the channel, leading to a mixed entangled state between Alice and the node. By using catalytic LOCC between Alice and the node, it is possible to establish a pure entangled state resorting to results in~\cite{Kondra2102.11136} and the fact that all entangled two-qubit states are distillable~\cite{HorodeckiPhysRevLett.78.574}. The particle is then sent through the second half of the channel from the node to Bob, leading to an entangled state between Alice and Bob, as claimed. 

Note that for $l \geq 2\ln 3/\alpha$ the setup cannot distribute entanglement, even if catalysis is used. In this case, it must be either $s \geq \ln 3/\alpha$ or $l - s \geq \ln 3/\alpha$. This means that either the first part of the channel or the second part of the channel is entanglement breaking.

\section{Discussion}

In this article we investigated different aspects of entanglement catalysis, both for quantum states and quantum channels. We proved the existence of universal catalysts, enabling all possible transformations between bipartite pure states. We investigated asymptotic state transitions in the presence of catalysts, proving an equivalence between the asymptotic and the single-copy transformations in the catalytic regime. Moreover, we showed that catalysis offers a significant advantage in asymptotic non-iid setings, where one aims to extract singlets from a sequence of non-identical quantum states. There exist sequences of two-qubit states which allow to extract an unbounded number of singlets if catalysis is used, but the probability to extract even one singlet without catalysis is vanishingly small. 

We further investigated the role of catalysis for noisy quantum channels. Using entangled catalysts, it is possible to turn a noisy quantum channel into a noiseless one, being able to transmit a certain number of qubits with arbitrary precision. We showed that any quantum channel can faithfully transmit qubits as long as the channel is not too noisy. Concretely, any quantum channel of dimension $d$ can transmit $m$ qubits whenever the von Neumann entropy of its Choi state is not larger than $\log_2 d - m$. Remarkably, for $d \geq 4$ any quantum channel can transmit at least one qubit if the Choi state has rank 2. We introduced the catalytic capacity, corresponding to the number of qubits which can be reliably transmitted in the presence of catalysts. We developed tools to estimate the catalytic capacity, allowing to determine the exact value for various quantum channels. We also demonstrated that catalysis is useful for entanglement distribution via single-qubit depolarizing channels. In particular, we showed that a depolarizing channel which is useless for entanglement distribution without catalysis can be turned into a useful channel by introducing a catalyst and an intermediate node which assists the parties with local measurements and classical communication.

While we focused on entanglement catalysis in this article, recent results show that catalysis is useful within the theories of quantum thermodynamics~\cite{GOUR20151,Brandao3275,Lipka-Bartosik2006.16290,Wilminge19060241,BoesPhysRevLett.122.210402,wilming2020entropy,shiraishi2020quantum,Henao2021catalytic,henao2022catalytic} and quantum coherence~\cite{BuPhysRevA.93.042326,StreltsovRevModPhys.89.041003,AbergPhysRevLett.113.150402,Vaccaro_2018,LostaglioPhysRevLett.123.020403,Takagi2106.12592,char2021catalytic,DattaPhysRevLett.130.240204}, giving access to state transformations which are impossible without the catalyst. It is reasonable to believe that several results presented in this work can also be extended to other resource theories, thus significantly enhancing our understanding of state transformations in quantum mechanics.

\section*{Acknowledgements}
We thank Stanis\l{}aw Kurdzia\l{}ek, Rafa\l{} Demkowicz-Dobrza\'nski, and Wojciech G\'orecki for discussion. This work was supported by the ``Quantum Optical Technologies'' project, carried out within the International Research Agendas programme of the Foundation for Polish Science co-financed by the European Union under the European Regional Development Fund, the ``Quantum Coherence and Entanglement for Quantum Technology'' project, carried out within the First Team programme of the Foundation for Polish Science co-financed by the European Union under the European Regional Development Fund, and the National Science Centre, Poland, within the QuantERA II Programme (No 2021/03/Y/ST2/00178, acronym ExTRaQT) that has received funding from the European Union's Horizon 2020 research and innovation programme under Grant Agreement No 101017733. CD acknowledges support from the German Federal
Ministry of Education and Research (BMBF) within the funding program ``quantum technologies -- from basic research to market'' in the joint project QSolid (grant
number 13N16163).

\appendix

\section{Entanglement entropy and logarithmic negativity show different ordering}\label{sec:different_ordering}

We will now prove that there exist two states $\ket{\psi}$ and $\ket{\phi}$ such that
\begin{subequations} \label{eq:UnboundedDimension4}
\begin{align}
E(\ket{\psi}) & \geq E(\ket{\phi}), \\
E_{N}(\ket{\psi}) & <E_{N}(\ket{\phi}).
\end{align}
\end{subequations}
For this, we will focus on two-qutrit states of the form
\begin{equation}
    \ket{\Psi}=\sin\alpha\cos\beta\ket{00}+\cos\alpha\cos\beta\ket{11}+\sin\beta\ket{22}.
\end{equation}
For the states $\ket{\psi}$ and $\ket{\phi}$ we choose the parameters $\alpha=1.3$; $\beta=0.75$ and $\alpha=0.7$; $\beta=1$, respectively. With this we find
\begin{subequations}
\begin{align}
E(\ket{\psi}) & =1.195,\,\,\,E(\ket{\phi})=1.157,\\
E_{N}(\ket{\psi}) & =1.324,\,\,\,E_{N}(\ket{\phi})=1.361.
\end{align}
\end{subequations}

Note that there do not exist two-qubit pure states fulfilling Eqs.~(\ref{eq:UnboundedDimension4}), as for two-qubit pure states all entanglement measures have the same ordering.

\section{Continuity of logarithmic negativity} \label{sec:ContinuityNegativity}

In the following, we show that the difference of logarithmic negativities of two bipartite states $\rho$ and $\sigma$ are arbitrarily small if they are arbitrarily close in trace distance.
It is known that trace norm and Hilbert-Schmidt norm satisfy
\begin{equation}\label{norm_relation}
   \Vert M \Vert_{2} \leq \Vert M \Vert_{1} \leq \sqrt{r}\Vert M \Vert_{2},
\end{equation}
for all matrices $M$ and $r$ represents the rank of the matrix $M$. Additionally, we know that Hilbert-Schmidt norm of a matrix is $\Vert M \Vert_2=\left(\sum_{i,j}|M_{ij}|^2\right)^{1/2}$, where $M_{ij}$ are the elements of the matrix $M$. Hence, under partial transposition, the Hilbert-Schmidt norm remains invariant, precisely $\Vert M \Vert_2=\Vert M^{T_A}\Vert_2$. 
Using the relation in Eq. (\ref{norm_relation}), we find 
\begin{eqnarray}
\Vert \rho-\sigma\Vert_2 \leq \Vert \rho-\sigma\Vert_1.
\end{eqnarray}
As Hilbert-Schmidt norm is invariant under partial transpose, we have 
\begin{equation}
    \Vert \rho^{T_A}-\sigma^{T_A}\Vert_2=\Vert \rho-\sigma\Vert_2\leq \Vert \rho-\sigma\Vert_1. 
\end{equation}
Again using Eq. (\ref{norm_relation}), we get 
\begin{align}
    |\Vert \rho^{T_A}\Vert_1-\Vert \sigma^{T_A}\Vert_1|&\leq\Vert \rho^{T_A}-\sigma^{T_A}\Vert_1 \nonumber\\ &\leq \sqrt{r}\Vert \rho^{T_A}-\sigma^{T_A}\Vert_2\nonumber\\ &\leq \sqrt{r}\Vert \rho-\sigma\Vert_1.
\end{align}
Next, we observe that
\begin{align}
E_N(\rho)-E_N(\sigma)&=\log_2\Vert \rho^{T_A} \Vert_1 -\log_2\Vert \sigma^{T_A} \Vert_1 \nonumber\\
&=\log_2\frac{\Vert \rho^{T_A} \Vert_1}{\Vert \sigma^{T_A} \Vert_1}\nonumber\\
&\leq  \log_2 \left(1+\frac{\sqrt{r}\Vert \rho-\sigma\Vert_1}{\Vert \sigma^{T_A} \Vert_1}\right)\nonumber\\
&\leq \frac{\sqrt{r}\Vert \rho-\sigma\Vert_1}{\ln 2 \Vert \sigma^{T_A} \Vert_1}\nonumber\\
&\leq  \frac{\sqrt{d}\Vert \rho-\sigma\Vert_1}{\ln 2 \Vert \sigma^{T_A} \Vert_1}\leq\frac{\sqrt{d}}{\ln 2 }\Vert \rho-\sigma\Vert_1,
\end{align}
where we assume that $\Vert \rho^{T_A} \Vert_1\geq\Vert \sigma^{T_A} \Vert_1$ and consider the relation $\ln(1+x)\leq x$ for $x>-1$ in the second line. In the last line, we further consider $r\leq d$ and $\Vert \sigma^{T_A} \Vert_1\geq 1$, where $d$ is the dimension of the total bipartite state. One can consider the opposite when $\Vert \rho^{T_A} \Vert_1<\Vert \sigma^{T_A} \Vert_1$ and will get 
\begin{equation}
    E_N(\sigma)-E_N(\rho)\leq \frac{\sqrt{d}}{\ln 2 }\Vert \rho-\sigma\Vert_1.
\end{equation}
Now if the states $\rho$ and $\sigma$ are arbitrarily close in trace distance, i.e., $\Vert \rho-\sigma\Vert_1\leq\varepsilon$, then we have 
\begin{equation}
    \left|E_{N}(\rho)-E_{N}(\sigma)\right|\leq\frac{\sqrt{d}\varepsilon}{\ln2}.
\end{equation}
Therefore, we prove that if the trace distance between two states is arbitrarily small, the logarithmic negativities of them are also arbitrarily close to each other.

\section{Proof of theorem \ref{thm:AsymptoticCatalysis}} \label{appendix:asymptotic_catalysis}

Here, we consider a general situation where $\rho$ can be transformed into $\sigma$ via asymptotic catalysis with unit rate. This means, for any $\varepsilon>0$ and any $\delta > 0$ there exist integers $n$ and $m$ with $n>m$, a catalyst state $\tau^C$ and an LOCC protocol $\Lambda$ such that
\begin{subequations} \label{eq:LambdaCatalyticAsymptotic}
\begin{align}
\left\Vert \Lambda\left[\rho^{\otimes n}\otimes\tau^{C}\right]-\sigma^{\otimes m}\otimes\sigma_g\otimes\tau^{C}\right\Vert _{1} & \leq\varepsilon,\label{asymptotic}\\
\mbox{Tr}_{S^{\otimes n}}\left[\Lambda\left(\rho^{\otimes n}\otimes\tau^{C}\right)\right] & =\tau^{C},\\
\frac{m}{n}+\delta &\geq 1. \label{rate}
\end{align}
\end{subequations}
To simplify the notation, we introduce the state $\mu = \Lambda\left(\rho^{\otimes n}\otimes\tau^{C}\right)$ which is in $S^{\otimes n}\otimes C$, where $S$ and $C$ denote the Hilbert space of the system and the catalyst, respectively. Here, $\sigma_g$ is a product state acting on the Hilbert space $S^{\otimes(n-m)}$.

Now, we will show that a catalyst $\tau'$ and an LOCC operation $\Lambda'$ can be chosen, such that 
$\Vert \mu' - \sigma\otimes \tau'\Vert_{1}< 2(\varepsilon + \delta)$,
where
\begin{equation}
\mu'= \Lambda'(\rho\otimes\tau')
\end{equation}
and
\begin{equation}
    \Tr_S[\mu'] = \tau'.
\end{equation}
We set the catalyst state to be 
\begin{equation}
    \tau' = \frac{1}{n}\sum_{k=1}^{n}\rho^{\otimes k-1} \otimes \mu_{n-k} \otimes \ket{k}\!\bra{k}, \label{eq:tau}
\end{equation}
where $\mu_{i}$ is the reduced state of $\mu$ after tracing out systems $i+1$ to $n$. Note that $\mu_i$ is a state in the Hilbert space of $S_1$ to $S_i$ and $C$. Additionally, note that $\mu_n\equiv\mu$, $\mu_0\equiv\tau^C$, and $\rho^{\otimes 0}\equiv 1$. The Hilbert space of the catalyst state $\tau'$ is in $S^{ \otimes{n-1}}\otimes C\otimes K$, where $K$ represents the Hilbert space of an auxiliary system of dimension $n$ which is maintained by Alice. 

In the following, we give a 3-step construction of an LOCC protocol $\Lambda'$, similar to the one in \cite{Kondra2102.11136,shiraishi2020quantum}:

(i) In the first step, Alice measures her register $K$ in the basis $\ket{k}$ and communicates the outcome to Bob. If the outcome is $n$, Alice and Bob perform the LOCC protocol $\Lambda$ given in Eqs.~(\ref{eq:LambdaCatalyticAsymptotic}) on $S_{1}\otimes S_{2}\otimes\cdots\otimes S_{n}\otimes C$. If the outcome is different from $n$, the parties do nothing.

(ii) Alice applies a unitary on her register $K$, transforming $\ket{n}\xrightarrow{}\ket{1}$ and $\ket{i}\xrightarrow{}\ket{i+1}$.

(iii) Finally, both Alice and Bob, apply a  SWAP on their parts of ($S_{i}$, $S_{i+1}$) and  ($S_{n}$, $S_{1}$), shifting $S_{i}\xrightarrow{} S_{i+1}$ and $S_{n}\xrightarrow{} S_{1}$.

The initial state of the system along with the catalyst is 
\begin{equation}
    \rho \otimes \tau' = \frac{1}{n}\sum_{k=1}^{n}\rho^{\otimes k} \otimes \mu_{n-k} \otimes \ket{k}\bra{k}.
\end{equation}
The initial state after applying step (i) becomes
\begin{equation}
    \eta^{i} = \frac{1}{n}\sum_{k=1}^{n-1}\rho^{\otimes k} \otimes \mu_{n-k} \otimes \ket{k}\bra{k} + \frac{1}{n}\mu\otimes \ket{n}\bra{n}.
\end{equation}
After step (ii), $\eta^{i}$ is transformed into 
\begin{equation}
    \eta^{ii} = \frac{1}{n}\sum_{k=1}^{n}\rho^{\otimes k-1} \otimes \mu_{n+1-k} \otimes \ket{k}\bra{k}. \label{eq:Muii}
\end{equation}
Having traced out $S_{n}$ from $\eta^{ii}$, we obtain $\tau'$, which is the initial state of the catalyst, see Eq.~(\ref{eq:tau}). Hence, we perform step (iii) to transform $\eta^{ii}$ to the final state $\mu'$ having the property $\Tr_{S}[\mu']$ = $\tau'$. Therefore, the state of the catalyst remains unchanged in the above procedure. 

To complete the proof, we will now show that $\Vert \mu' - \sigma\otimes \tau'\Vert_{1}< 2(\varepsilon +\delta)$.  As $\mu'$ is equivalent to the state $\eta^{ii}$ up to a cyclic SWAP, we have 
\begin{equation}
\Vert \mu' - \sigma\otimes \tau'\Vert_{1}=  \Vert \eta^{ii} - \gamma\Vert_{1},   
\end{equation}
where 
\begin{equation}
    \gamma=\frac{1}{n}\sum_{k=1}^{n}\rho^{\otimes k-1} \otimes \tilde{\mu}_{n+1-k} \otimes \ket{k}\!\bra{k}.
\end{equation}
Here, $\tilde{\mu}_{i}$ is constructed from $\mu_i$ by tracing out $S_i$ and replacing it with $\sigma$, i.e., 
$\tilde{\mu}_i = (\Tr_{S_i} \mu_i) \otimes \sigma$,
up to the order of the components in the tensor product:
$S_{1}\otimes S_{2}\otimes\cdots\otimes S_{i}\otimes C$.
Now, we obtain
\begin{align}
\Vert\eta^{ii}-\gamma\Vert_{1} & =\frac{1}{n}\sum_{k=1}^{n}\Vert\mu_{n+1-k}-\tilde{\mu}_{n+1-k}\Vert_{1}\\
 & =\frac{1}{n}\sum_{k=1}^{n-m}\Vert\mu_{n+1-k}-\tilde{\mu}_{n+1-k}\Vert_{1}\nonumber \\
 & +\frac{1}{n}\sum_{k=n-m+1}^{n}\Vert\mu_{n+1-k}-\tilde{\mu}_{n+1-k}\Vert_{1}\nonumber \\
 & \leq2\delta+\frac{1}{n}\sum_{l=1}^{m}\Vert\mu_{l}-\tilde{\mu}_{l}\Vert_{1}\nonumber \\
 & \leq2\delta+\frac{1}{n}\sum_{l=1}^{m}\Vert\mu_{l}-\sigma^{\otimes l}\otimes\tau^{C}\Vert_{1}\nonumber \\
 & +\frac{1}{n}\sum_{l=1}^{m}\Vert\tilde{\mu}_{l}-\sigma^{\otimes l}\otimes\tau^{C}\Vert_{1}\nonumber \\
 & \leq2(\delta + \frac{m}{n} \varepsilon) \leq2(\delta + \varepsilon).\nonumber 
\end{align}
In the first inequality we used Eq.~(\ref{rate}) and the fact that $||\rho-\sigma||_1 \leq 2$ for any quantum states $\rho$ and $\sigma$. The second inequality follows from the triangle inequality. In the third inequality we used Eq. (\ref{asymptotic}) together with the fact that the trace norm does not increase under partial trace. 

The above arguments prove that it is possible to convert $\rho$ into $\sigma$ via catalytic LOCC whenever the conversion $\rho \rightarrow \sigma$ is possible via asymptotic catalysis with unit rate. To prove the converse, note that whenever there exists a catalytic LOCC protocol converting $\rho$ into $\sigma$, it is clearly possible to achieve asymptotic catalytic conversion with unit rate. This completes the proof of the theorem. 

We note that this theorem also covers the result presented in Theorem 1 of~\cite{Kondra2102.11136}. To see this, note that if $\rho$ can be converted into $\sigma$ via asymptotic LOCC with unit rate, it is also possible to achieve conversion with unit rate via asymptotic catalysis, simply by adding a catalyst which does not take part in the process. Similarly as in Theorem 1 of~\cite{Kondra2102.11136}, the proof presented above also applies for the case where the system $S$ consists of more than two subsystems, and multipartite LOCC protocols are considered. 

\section{Maximum entropy for a probability distribution where one of the probabilities is given}
\label{sec:maxEntropy}
We will now prove that the entropy $H(p_i)$ of a probability distribution $\{p_i\}_{i=0}^3$ achieves its maximum for a given $p_0$ when $p_1=p_2=p_3=(1-p_0)/3$. 

For $j=\{1,2,3\}$ we define the probability distribution $q_j=p_j/(1-p_0)$. Using this we find
\begin{equation}
-\sum_{j=1}^{3}q_{j}\log_{2}q_{j}=\log_{2}(1-p_0)-\frac{\sum_{j=1}^{3}p_{j}\log_{2}p_{j}}{1-p_0}
\end{equation}
which implies
\begin{align}
-\sum_{j=1}^{3}p_{j}\log_{2}p_{j} = & -(1-p_0)\sum_{j=1}^{3}q_{j}\log_{2}q_{j} \\
 & -(1-p_0)\log_{2}(1-p_0). \nonumber
\end{align}
For a given $p_0$ the maximum of the right hand side is achieved when the distribution is uniform, precisely $q_j=1/3$, and hence, $p_j=(1-p_0)/3$ maximizes the entropy $H(p_i)$, as claimed.

\section{Comparison with the fidelity of catalytic teleportation \cite{Lipka-Bartosik2102.11846}}\label{appendix_comparison}

From Theorem 1 of Ref. \cite{Lipka-Bartosik2102.11846}, one can see that, a bipartite state $\rho$  (shared between Alice and Bob) can be used to teleport $m$-qubits perfectly whenever
\begin{equation}\label{cond1}
   f_{\text{reg}}(\rho)=\lim_{n\to\infty}\frac{f_{n}(\rho^{\otimes n})}{n}\geq 1.
\end{equation}
Where, $f_{n}(\sigma)$ is given by the following optimisation problem
\begin{equation}
     f_{n}(\sigma)=\max_{\mathcal{E}\in \text{LOCC}}\sum^{n}_{i=1}\bra{\phi_{2^m}^{+}}\text{Tr}_{/i}\left(\mathcal{E}(\sigma)\right)\ket{\phi_{2^m}^{+}}.
\end{equation}
Here, $\text{Tr}_{/i}$ is the partial trace performed over systems $1.....i-1,i+1....n$. Note that Eq. (\ref{cond1}) is equivalent to $R_{\textrm{mar}}(\rho\to\ket{\phi^{+}_2})\geq m$ i.e, \emph{marginal rate} of transforming $\rho$ into $\ket{\phi^{+}_2}$ is greater than $m$. We refer to Ref. \cite{ganardi2023catalytic} for the definition of marginal rate. From Proposition 6 of \cite{ganardi2023catalytic} we see that
\begin{equation}
    R_{\textrm{mar}}(\rho\to\ket{\phi^{+}_2})=R(\rho\to\ket{\phi^{+}_2}),
\end{equation}
where $R(\rho\to\ket{\phi^{+}_2})$ is the asymptotic transformation rate from $\rho$ into $\ket{\phi^{+}_2}$. This shows Eq. (\ref{cond1}) is equivalent to
\begin{equation}
    E_{\mathrm{d}}(\rho)\geq m.
\end{equation}
This shows the connection between Eq. (\ref{eq:EdM}) and the main result of reference \cite{Lipka-Bartosik2102.11846}.


\bibliographystyle{quantum}
\bibliography{literature}

\begin{thebibliography}{10}

\bibitem{JonathanPhysRevLett.83.3566}
Daniel Jonathan and Martin~B. Plenio.
\newblock ``{Entanglement-Assisted Local Manipulation of Pure Quantum
  States}''.
\newblock \href{https://dx.doi.org/10.1103/PhysRevLett.83.3566}{Phys. Rev.
  Lett. {\bf 83}, 3566--3569}~(1999).

\bibitem{EisertPhysRevLett.85.437}
Jens Eisert and Martin Wilkens.
\newblock ``{Catalysis of Entanglement Manipulation for Mixed States}''.
\newblock \href{https://dx.doi.org/10.1103/PhysRevLett.85.437}{Phys. Rev. Lett.
  {\bf 85}, 437--440}~(2000).

\bibitem{Kondra2102.11136}
Tulja~Varun Kondra, Chandan Datta, and Alexander Streltsov.
\newblock ``Catalytic transformations of pure entangled states''.
\newblock \href{https://dx.doi.org/10.1103/PhysRevLett.127.150503}{Phys. Rev.
  Lett. {\bf 127}, 150503}~(2021).

\bibitem{Lipka-Bartosik2102.11846}
Patryk Lipka-Bartosik and Paul Skrzypczyk.
\newblock ``Catalytic quantum teleportation''.
\newblock \href{https://dx.doi.org/10.1103/PhysRevLett.127.080502}{Phys. Rev.
  Lett. {\bf 127}, 080502}~(2021).

\bibitem{NielsenPhysRevLett.83.436}
M.~A. Nielsen.
\newblock ``{Conditions for a Class of Entanglement Transformations}''.
\newblock \href{https://dx.doi.org/10.1103/PhysRevLett.83.436}{Phys. Rev. Lett.
  {\bf 83}, 436--439}~(1999).

\bibitem{VidalPhysRevA.62.012304}
Guifr\'e Vidal, Daniel Jonathan, and M.~A. Nielsen.
\newblock ``Approximate transformations and robust manipulation of bipartite
  pure-state entanglement''.
\newblock \href{https://dx.doi.org/10.1103/PhysRevA.62.012304}{Phys. Rev. A
  {\bf 62}, 012304}~(2000).

\bibitem{PhysRevA.64.042314}
Sumit Daftuar and Matthew Klimesh.
\newblock ``Mathematical structure of entanglement catalysis''.
\newblock \href{https://dx.doi.org/10.1103/PhysRevA.64.042314}{Phys. Rev. A
  {\bf 64}, 042314}~(2001).

\bibitem{DuanPhysRevA.71.042319}
Runyao Duan, Yuan Feng, Xin Li, and Mingsheng Ying.
\newblock ``Multiple-copy entanglement transformation and entanglement
  catalysis''.
\newblock \href{https://dx.doi.org/10.1103/PhysRevA.71.042319}{Phys. Rev. A
  {\bf 71}, 042319}~(2005).

\bibitem{Turgut_2007}
S~Turgut.
\newblock ``Catalytic transformations for bipartite pure states''.
\newblock \href{https://dx.doi.org/10.1088/1751-8113/40/40/012}{J. Phys. A {\bf
  40}, 12185--12212}~(2007).

\bibitem{Klimesh0709.3680}
Matthew Klimesh.
\newblock ``Inequalities that collectively completely characterize the
  catalytic majorization relation''~(2007).
\newblock  \href{http://arxiv.org/abs/0709.3680}{arXiv:0709.3680}.

\bibitem{Aubrun2008}
Guillaume Aubrun and Ion Nechita.
\newblock ``{Catalytic Majorization and $\ell_p$ Norms}''.
\newblock \href{https://dx.doi.org/10.1007/s00220-007-0382-4}{Commun. Math.
  Phys. {\bf 278}, 133--144}~(2008).

\bibitem{SandersPhysRevA.79.054302}
Yuval~Rishu Sanders and Gilad Gour.
\newblock ``Necessary conditions for entanglement catalysts''.
\newblock \href{https://dx.doi.org/10.1103/PhysRevA.79.054302}{Phys. Rev. A
  {\bf 79}, 054302}~(2009).

\bibitem{GraboweckyPhysRevA.99.052348}
Michael Grabowecky and Gilad Gour.
\newblock ``Bounds on entanglement catalysts''.
\newblock \href{https://dx.doi.org/10.1103/PhysRevA.99.052348}{Phys. Rev. A
  {\bf 99}, 052348}~(2019).

\bibitem{Gupta_PRA2022}
Rivu Gupta, Arghya Maity, Shiladitya Mal, and Aditi Sen(De).
\newblock ``Statistics of entanglement transformation with hierarchies among
  catalysts''.
\newblock \href{https://dx.doi.org/10.1103/PhysRevA.106.052402}{Phys. Rev. A
  {\bf 106}, 052402}~(2022).

\bibitem{catalysis_review}
Chandan Datta, Tulja~Varun Kondra, Marek Miller, and Alexander Streltsov.
\newblock ``Catalysis of entanglement and other quantum resources''.
\newblock \href{https://dx.doi.org/10.1088/1361-6633/acfbec}{Reports on
  Progress in Physics {\bf 86}, 116002}~(2023).

\bibitem{LloydPhysRevA.55.1613}
Seth Lloyd.
\newblock ``Capacity of the noisy quantum channel''.
\newblock \href{https://dx.doi.org/10.1103/PhysRevA.55.1613}{Phys. Rev. A {\bf
  55}, 1613--1622}~(1997).

\bibitem{Divincenzo_1998}
David~P. DiVincenzo, Peter~W. Shor, and John~A. Smolin.
\newblock ``Quantum-channel capacity of very noisy channels''.
\newblock \href{https://dx.doi.org/10.1103/PhysRevA.57.830}{Phys. Rev. A {\bf
  57}, 830--839}~(1998).

\bibitem{Barnum_1998}
Howard Barnum, M.~A. Nielsen, and Benjamin Schumacher.
\newblock ``Information transmission through a noisy quantum channel''.
\newblock \href{https://dx.doi.org/10.1103/PhysRevA.57.4153}{Phys. Rev. A {\bf
  57}, 4153--4175}~(1998).

\bibitem{Schumacher_1998}
Benjamin Schumacher and Michael~D. Westmoreland.
\newblock ``Quantum privacy and quantum coherence''.
\newblock \href{https://dx.doi.org/10.1103/PhysRevLett.80.5695}{Phys. Rev.
  Lett. {\bf 80}, 5695--5697}~(1998).

\bibitem{Devetak1377491}
I.~Devetak.
\newblock ``The private classical capacity and quantum capacity of a quantum
  channel''.
\newblock \href{https://dx.doi.org/10.1109/TIT.2004.839515}{IEEE Transactions
  on Information Theory {\bf 51}, 44--55}~(2005).

\bibitem{Rubboli2111.13356}
Roberto Rubboli and Marco Tomamichel.
\newblock ``Fundamental limits on correlated catalytic state transformations''.
\newblock \href{https://dx.doi.org/10.1103/PhysRevLett.129.120506}{Phys. Rev.
  Lett. {\bf 129}, 120506}~(2022).

\bibitem{vanDamPhysRevA.67.060302}
Wim van Dam and Patrick Hayden.
\newblock ``Universal entanglement transformations without communication''.
\newblock \href{https://dx.doi.org/10.1103/PhysRevA.67.060302}{Phys. Rev. A
  {\bf 67}, 060302}~(2003).

\bibitem{ZyczkowskiPhysRevA.58.883}
Karol \ifmmode~\dot{Z}\else \.{Z}\fi{}yczkowski, Pawe\l{} Horodecki, Anna
  Sanpera, and Maciej Lewenstein.
\newblock ``Volume of the set of separable states''.
\newblock \href{https://dx.doi.org/10.1103/PhysRevA.58.883}{Phys. Rev. A {\bf
  58}, 883--892}~(1998).

\bibitem{Vidal_2002_PRA}
G.~Vidal and R.~F. Werner.
\newblock ``Computable measure of entanglement''.
\newblock \href{https://dx.doi.org/10.1103/PhysRevA.65.032314}{Phys. Rev. A
  {\bf 65}, 032314}~(2002).

\bibitem{BennettPhysRevA.53.2046}
Charles~H. Bennett, Herbert~J. Bernstein, Sandu Popescu, and Benjamin
  Schumacher.
\newblock ``Concentrating partial entanglement by local operations''.
\newblock \href{https://dx.doi.org/10.1103/PhysRevA.53.2046}{Phys. Rev. A {\bf
  53}, 2046--2052}~(1996).

\bibitem{VedralPhysRevLett.78.2275}
V.~Vedral, M.~B. Plenio, M.~A. Rippin, and P.~L. Knight.
\newblock ``{Quantifying Entanglement}''.
\newblock \href{https://dx.doi.org/10.1103/PhysRevLett.78.2275}{Phys. Rev.
  Lett. {\bf 78}, 2275--2279}~(1997).

\bibitem{HorodeckiRevModPhys.81.865}
Ryszard Horodecki, Pawe\l{} Horodecki, Micha\l{} Horodecki, and Karol
  Horodecki.
\newblock ``Quantum entanglement''.
\newblock \href{https://dx.doi.org/10.1103/RevModPhys.81.865}{Rev. Mod. Phys.
  {\bf 81}, 865--942}~(2009).

\bibitem{Lipka-Bartosik2006.16290}
Patryk Lipka-Bartosik and Paul Skrzypczyk.
\newblock ``{All States are Universal Catalysts in Quantum Thermodynamics}''.
\newblock \href{https://dx.doi.org/10.1103/PhysRevX.11.011061}{Phys. Rev. X
  {\bf 11}, 011061}~(2021).

\bibitem{kondra2021stochastic}
Tulja~Varun Kondra, Chandan Datta, and Alexander Streltsov.
\newblock ``Stochastic approximate state conversion for entanglement and
  general quantum resource theories''~(2021).
\newblock  \href{http://arxiv.org/abs/2111.12646}{arXiv:2111.12646}.

\bibitem{Baccetti_2013}
Valentina Baccetti and Matt Visser.
\newblock ``Infinite shannon entropy''.
\newblock \href{https://dx.doi.org/10.1088/1742-5468/2013/04/p04010}{Journal of
  Statistical Mechanics: Theory and Experiment {\bf 2013}, P04010}~(2013).

\bibitem{Bowen4567558}
Garry Bowen and Nilanjana Datta.
\newblock ``Asymptotic entanglement manipulation of bipartite pure states''.
\newblock \href{https://dx.doi.org/10.1109/TIT.2008.926377}{IEEE Transactions
  on Information Theory {\bf 54}, 3677--3686}~(2008).

\bibitem{Buscemi1.3483717}
Francesco Buscemi and Nilanjana Datta.
\newblock ``Distilling entanglement from arbitrary resources''.
\newblock \href{https://dx.doi.org/10.1063/1.3483717}{Journal of Mathematical
  Physics {\bf 51}, 102201}~(2010).

\bibitem{WaeldchenPhysRevLett.116.020502}
Stephan Waeldchen, Janina Gertis, Earl~T. Campbell, and Jens Eisert.
\newblock ``Renormalizing entanglement distillation''.
\newblock \href{https://dx.doi.org/10.1103/PhysRevLett.116.020502}{Phys. Rev.
  Lett. {\bf 116}, 020502}~(2016).

\bibitem{Shannon_1948}
C.~E. Shannon.
\newblock ``A mathematical theory of communication''.
\newblock
  \href{https://dx.doi.org/https://doi.org/10.1002/j.1538-7305.1948.tb01338.x}{Bell
  System Technical Journal {\bf 27}, 379--423}~(1948).

\bibitem{shannon1998mathematical}
C.E. Shannon and W.~Weaver.
\newblock ``The mathematical theory of communication''.
\newblock University of Illinois Press. ~(1998).
\newblock  url:~\url{http://www.worldcat.org/oclc/967725093}.

\bibitem{Covar_1991}
T.~M. Cover and J.~A. Thomas.
\newblock ``Elements of information theory''.
\newblock \href{https://dx.doi.org/https://doi.org/10.1002/047174882X}{John
  Wiley \& Sons, Ltd}. ~(2005).

\bibitem{SchumacherPhysRevA.54.2629}
Benjamin Schumacher and M.~A. Nielsen.
\newblock ``Quantum data processing and error correction''.
\newblock \href{https://dx.doi.org/10.1103/PhysRevA.54.2629}{Phys. Rev. A {\bf
  54}, 2629--2635}~(1996).

\bibitem{Horodecki_2000}
Micha\l{} Horodecki, Pawe\l{} Horodecki, and Ryszard Horodecki.
\newblock ``Unified approach to quantum capacities: Towards quantum noisy
  coding theorem''.
\newblock \href{https://dx.doi.org/10.1103/PhysRevLett.85.433}{Phys. Rev. Lett.
  {\bf 85}, 433--436}~(2000).

\bibitem{Shor_2002}
P.~W. Shor.
\newblock ``{The quantum channel capacity and coherent information}''.
\newblock In MSRI Workshop on Quantum Computation.
\newblock ~(2002).

\bibitem{Watrous_2018}
John Watrous.
\newblock ``The theory of quantum information''.
\newblock \href{https://dx.doi.org/10.1017/9781316848142}{Cambridge University
  Press}. ~(2018).

\bibitem{Cerf_2000}
Nicolas~J. Cerf.
\newblock ``Pauli cloning of a quantum bit''.
\newblock \href{https://dx.doi.org/10.1103/PhysRevLett.84.4497}{Phys. Rev.
  Lett. {\bf 84}, 4497--4500}~(2000).

\bibitem{Holevo_2001}
A.~S. Holevo and R.~F. Werner.
\newblock ``Evaluating capacities of bosonic gaussian channels''.
\newblock \href{https://dx.doi.org/10.1103/PhysRevA.63.032312}{Phys. Rev. A
  {\bf 63}, 032312}~(2001).

\bibitem{Wolf_2007}
Michael~M. Wolf, David P\'erez-Garc\'{\i}a, and Geza Giedke.
\newblock ``Quantum capacities of bosonic channels''.
\newblock \href{https://dx.doi.org/10.1103/PhysRevLett.98.130501}{Phys. Rev.
  Lett. {\bf 98}, 130501}~(2007).

\bibitem{Smith_2008}
Graeme Smith, John~A. Smolin, and Andreas Winter.
\newblock ``The quantum capacity with symmetric side channels''.
\newblock \href{https://dx.doi.org/10.1109/TIT.2008.928269}{IEEE Transactions
  on Information Theory {\bf 54}, 4208--4217}~(2008).

\bibitem{Buscemi_2010}
Francesco Buscemi and Nilanjana Datta.
\newblock ``The quantum capacity of channels with arbitrarily correlated
  noise''.
\newblock \href{https://dx.doi.org/10.1109/TIT.2009.2039166}{IEEE Transactions
  on Information Theory {\bf 56}, 1447--1460}~(2010).

\bibitem{Leditzky_2018}
Felix Leditzky, Debbie Leung, and Graeme Smith.
\newblock ``Quantum and private capacities of low-noise channels''.
\newblock \href{https://dx.doi.org/10.1103/PhysRevLett.120.160503}{Phys. Rev.
  Lett. {\bf 120}, 160503}~(2018).

\bibitem{Cueves_2017}
\'Alvaro Cuevas, Massimiliano Proietti, Mario~Arnolfo Ciampini, Stefano
  Duranti, Paolo Mataloni, Massimiliano~F. Sacchi, and Chiara Macchiavello.
\newblock ``Experimental detection of quantum channel capacities''.
\newblock \href{https://dx.doi.org/10.1103/PhysRevLett.119.100502}{Phys. Rev.
  Lett. {\bf 119}, 100502}~(2017).

\bibitem{Macchiavello_2016}
Chiara Macchiavello and Massimiliano~F. Sacchi.
\newblock ``Detecting lower bounds to quantum channel capacities''.
\newblock \href{https://dx.doi.org/10.1103/PhysRevLett.116.140501}{Phys. Rev.
  Lett. {\bf 116}, 140501}~(2016).

\bibitem{DavisPhysRevA.97.062310}
Noah Davis, Maksim~E. Shirokov, and Mark~M. Wilde.
\newblock ``Energy-constrained two-way assisted private and quantum capacities
  of quantum channels''.
\newblock \href{https://dx.doi.org/10.1103/PhysRevA.97.062310}{Phys. Rev. A
  {\bf 97}, 062310}~(2018).

\bibitem{Gyongyosi_2018}
Laszlo Gyongyosi, Sandor Imre, and Hung~Viet Nguyen.
\newblock ``A survey on quantum channel capacities''.
\newblock \href{https://dx.doi.org/10.1109/COMST.2017.2786748}{IEEE
  Communications Surveys Tutorials {\bf 20}, 1149--1205}~(2018).

\bibitem{Holevo_2020}
A~S Holevo.
\newblock ``Quantum channel capacities''.
\newblock \href{https://dx.doi.org/10.1070/qel17285}{Quantum Electronics {\bf
  50}, 440--446}~(2020).

\bibitem{ganardi2023catalytic}
Ray Ganardi, Tulja~Varun Kondra, and Alexander Streltsov.
\newblock ``Catalytic and asymptotic equivalence for quantum
  entanglement''~(2023).
\newblock  \href{http://arxiv.org/abs/2305.03488}{arXiv:2305.03488}.

\bibitem{Devetak2005}
Igor Devetak and Andreas Winter.
\newblock ``Distillation of secret key and entanglement from quantum states''.
\newblock \href{https://dx.doi.org/10.1098/rspa.2004.1372}{Proc. R. Soc. Lond.
  A {\bf 461}, 207--235}~(2005).

\bibitem{Christandl_2004}
Matthias Christandl and Andreas Winter.
\newblock ``{``Squashed entanglement'': An additive entanglement measure}''.
\newblock \href{https://dx.doi.org/10.1063/1.1643788}{J. Math. Phys. {\bf 45},
  829--840}~(2004).

\bibitem{Alicki_2004}
R~Alicki and M~Fannes.
\newblock ``Continuity of quantum conditional information''.
\newblock \href{https://dx.doi.org/10.1088/0305-4470/37/5/l01}{J. Phys. A {\bf
  37}, L55--L57}~(2004).

\bibitem{HorodeckiEntanglementBreakingChannels}
Michael Horodecki, Peter~W. Shor, and Mary~Beth Ruskai.
\newblock ``{Entanglement Breaking Channels}''.
\newblock \href{https://dx.doi.org/10.1142/S0129055X03001709}{Rev. Math. Phys.
  {\bf 15}, 629--641}~(2003).

\bibitem{PhysRevA.92.012335}
Alexander Streltsov, Remigiusz Augusiak, Maciej Demianowicz, and Maciej
  Lewenstein.
\newblock ``Progress towards a unified approach to entanglement distribution''.
\newblock \href{https://dx.doi.org/10.1103/PhysRevA.92.012335}{Phys. Rev. A
  {\bf 92}, 012335}~(2015).

\bibitem{BennettPhysRevA.54.3824}
Charles~H. Bennett, David~P. DiVincenzo, John~A. Smolin, and William~K.
  Wootters.
\newblock ``Mixed-state entanglement and quantum error correction''.
\newblock \href{https://dx.doi.org/10.1103/PhysRevA.54.3824}{Phys. Rev. A {\bf
  54}, 3824--3851}~(1996).

\bibitem{WoottersPhysRevLett.80.2245}
William~K. Wootters.
\newblock ``{Entanglement of Formation of an Arbitrary State of Two Qubits}''.
\newblock \href{https://dx.doi.org/10.1103/PhysRevLett.80.2245}{Phys. Rev.
  Lett. {\bf 80}, 2245--2248}~(1998).

\bibitem{DUTTA20162191}
Arijit Dutta, Junghee Ryu, Wies\l{}aw Laskowski, and Marek \.{Z}ukowski.
\newblock ``Entanglement criteria for noise resistance of two-qudit states''.
\newblock
  \href{https://dx.doi.org/https://doi.org/10.1016/j.physleta.2016.04.043}{Physics
  Letters A {\bf 380}, 2191--2199}~(2016).

\bibitem{AugusiakPhysRevA.77.030301}
Remigiusz Augusiak, Maciej Demianowicz, and Pawe\l{} Horodecki.
\newblock ``Universal observable detecting all two-qubit entanglement and
  determinant-based separability tests''.
\newblock \href{https://dx.doi.org/10.1103/PhysRevA.77.030301}{Phys. Rev. A
  {\bf 77}, 030301}~(2008).

\bibitem{HorodeckiPhysRevLett.78.574}
Micha\l{} Horodecki, Pawe\l{} Horodecki, and Ryszard Horodecki.
\newblock ``{Inseparable Two Spin- $\frac{1}{2}$ Density Matrices Can Be
  Distilled to a Singlet Form}''.
\newblock \href{https://dx.doi.org/10.1103/PhysRevLett.78.574}{Phys. Rev. Lett.
  {\bf 78}, 574--577}~(1997).

\bibitem{GOUR20151}
Gilad Gour, Markus~P. M\"uller, Varun Narasimhachar, Robert~W. Spekkens, and
  Nicole {Yunger Halpern}.
\newblock ``The resource theory of informational nonequilibrium in
  thermodynamics''.
\newblock
  \href{https://dx.doi.org/https://doi.org/10.1016/j.physrep.2015.04.003}{Physics
  Reports {\bf 583}, 1--58}~(2015).

\bibitem{Brandao3275}
Fernando Brand{\~a}o, Micha{\l} Horodecki, Nelly Ng, Jonathan Oppenheim, and
  Stephanie Wehner.
\newblock ``The second laws of quantum thermodynamics''.
\newblock \href{https://dx.doi.org/10.1073/pnas.1411728112}{Proc. Natl. Acad.
  Sci. U.S.A. {\bf 112}, 3275--3279}~(2015).

\bibitem{Wilminge19060241}
Henrik Wilming, Rodrigo Gallego, and Jens Eisert.
\newblock ``{Axiomatic Characterization of the Quantum Relative Entropy and
  Free Energy}''.
\newblock \href{https://dx.doi.org/10.3390/e19060241}{Entropy {\bf 19},
  241}~(2017).

\bibitem{BoesPhysRevLett.122.210402}
Paul Boes, Jens Eisert, Rodrigo Gallego, Markus~P. M\"uller, and Henrik
  Wilming.
\newblock ``{Von Neumann Entropy from Unitarity}''.
\newblock \href{https://dx.doi.org/10.1103/PhysRevLett.122.210402}{Phys. Rev.
  Lett. {\bf 122}, 210402}~(2019).

\bibitem{wilming2020entropy}
H.~Wilming.
\newblock ``Entropy and reversible catalysis''.
\newblock \href{https://dx.doi.org/10.1103/PhysRevLett.127.260402}{Phys. Rev.
  Lett. {\bf 127}, 260402}~(2021).

\bibitem{shiraishi2020quantum}
Naoto Shiraishi and Takahiro Sagawa.
\newblock ``{Quantum Thermodynamics of Correlated-Catalytic State Conversion at
  Small Scale}''.
\newblock \href{https://dx.doi.org/10.1103/PhysRevLett.126.150502}{Phys. Rev.
  Lett. {\bf 126}, 150502}~(2021).

\bibitem{Henao2021catalytic}
Ivan Henao and Raam Uzdin.
\newblock ``Catalytic transformations with finite-size environments:
  applications to cooling and thermometry''.
\newblock \href{https://dx.doi.org/10.22331/q-2021-09-21-547}{{Quantum} {\bf
  5}, 547}~(2021).

\bibitem{henao2022catalytic}
I.~Henao and R.~Uzdin.
\newblock ``Catalytic leverage of correlations and mitigation of dissipation in
  information erasure''.
\newblock \href{https://dx.doi.org/10.1103/PhysRevLett.130.020403}{Phys. Rev.
  Lett. {\bf 130}, 020403}~(2023).

\bibitem{BuPhysRevA.93.042326}
Kaifeng Bu, Uttam Singh, and Junde Wu.
\newblock ``Catalytic coherence transformations''.
\newblock \href{https://dx.doi.org/10.1103/PhysRevA.93.042326}{Phys. Rev. A
  {\bf 93}, 042326}~(2016).

\bibitem{StreltsovRevModPhys.89.041003}
Alexander Streltsov, Gerardo Adesso, and Martin~B. Plenio.
\newblock ``Colloquium: Quantum coherence as a resource''.
\newblock \href{https://dx.doi.org/10.1103/RevModPhys.89.041003}{Rev. Mod.
  Phys. {\bf 89}, 041003}~(2017).

\bibitem{AbergPhysRevLett.113.150402}
Johan \AA{}berg.
\newblock ``{Catalytic Coherence}''.
\newblock \href{https://dx.doi.org/10.1103/PhysRevLett.113.150402}{Phys. Rev.
  Lett. {\bf 113}, 150402}~(2014).

\bibitem{Vaccaro_2018}
Joan~A Vaccaro, Sarah Croke, and Stephen~M Barnett.
\newblock ``Is coherence catalytic?''.
\newblock \href{https://dx.doi.org/10.1088/1751-8121/aac112}{J. Phys. A {\bf
  51}, 414008}~(2018).

\bibitem{LostaglioPhysRevLett.123.020403}
Matteo Lostaglio and Markus~P. M\"uller.
\newblock ``{Coherence and Asymmetry Cannot be Broadcast}''.
\newblock \href{https://dx.doi.org/10.1103/PhysRevLett.123.020403}{Phys. Rev.
  Lett. {\bf 123}, 020403}~(2019).

\bibitem{Takagi2106.12592}
Ryuji Takagi and Naoto Shiraishi.
\newblock ``Correlation in catalysts enables arbitrary manipulation of quantum
  coherence''.
\newblock \href{https://dx.doi.org/10.1103/PhysRevLett.128.240501}{Phys. Rev.
  Lett. {\bf 128}, 240501}~(2022).

\bibitem{char2021catalytic}
Priyabrata Char, Dipayan Chakraborty, Amit Bhar, Indrani Chattopadhyay, and
  Debasis Sarkar.
\newblock ``Catalytic transformations in coherence theory''.
\newblock \href{https://dx.doi.org/10.1103/PhysRevA.107.012404}{Phys. Rev. A
  {\bf 107}, 012404}~(2023).

\bibitem{DattaPhysRevLett.130.240204}
Chandan Datta, Ray Ganardi, Tulja~Varun Kondra, and Alexander Streltsov.
\newblock ``Is there a finite complete set of monotones in any quantum resource
  theory?''.
\newblock \href{https://dx.doi.org/10.1103/PhysRevLett.130.240204}{Phys. Rev.
  Lett. {\bf 130}, 240204}~(2023).

\end{thebibliography}

\end{document}